\newcommand{\Cov}{\mbox{Cov}}
\newcommand{\ind}{\stackrel{\mathrm{ind}}{\sim}}
\newcommand{\mbf}[1]{\mathbf{#1}}
\newtheorem{theorem}{Theorem}[section]
\newtheorem{lemma}[theorem]{Lemma}
\let\orgdescriptionlabel\descriptionlabel
\renewcommand*{\descriptionlabel}[1]{%
  \let\orglabel\label
  \let\label\@gobble
  \phantomsection
  \edef\@currentlabel{#1}%
  \let\label\orglabel
  \orgdescriptionlabel{#1}%
}
\begin{document}

\begin{frontmatter}
\title{Bayesian Pairwise Estimation Under Dependent Informative Sampling}
\runtitle{Pairwise Estimation Under Informative Sampling}
%\thankstext{T1}{Substance Abuse and Mental Health Services Administration, 5600 Fishers Ln., Rockville, MD 20857 USA}

\begin{aug}
\author{\fnms{Matthew R.} \snm{Williams}\corref{}\ead[label=e1]{Matthew.Williams@samhsa.hhs.gov}}
\address{Substance Abuse and Mental Health Services Administration\\ 5600 Fishers Ln., Rockville, MD 20857 USA\\ \printead{e1}}

\author{\fnms{Terrance D.} \snm{Savitsky}\ead[label=e2]{Savitsky.Terrance@bls.gov}}
\address{U.S. Bureau of Labor Statistics\\ 2 Massachusetts Ave. N.E, Washington, D.C. 20212 USA\\ \printead{e2}}

\runauthor{M. R. Williams et al.}

%\affiliation{Substance Abuse and Mental Health Services Administration\thanksmark{m1} and U.S. Bureau of Labor Statistics\thanksmark{m2}}

%\address{5600 Fishers Ln., Rockville, MD 20857 USA\\
%\printead{e2}}

%\address{2 Massachusetts Ave. N.E, Washington, D.C. 20212 USA\\
%\printead{e2}}

\end{aug}

%-------- BEGIN DOCUMENT --------------------------------------------
\maketitle
\begin{abstract}
An informative sampling design leads to the selection of units whose inclusion probabilities are correlated with the response variable of interest.  Model inference performed on the resulting observed sample will be biased for the population generative model.  One approach that produces asymptotically unbiased inference employs marginal inclusion probabilities to form sampling weights used to exponentiate each likelihood contribution of a pseudo likelihood used to form a pseudo posterior distribution.  Conditions for posterior consistency restrict applicable sampling designs to those under which pairwise inclusion dependencies asymptotically limit to $0$.  There are many sampling designs excluded by this restriction; for example, a multi-stage design that samples individuals within households.  Viewing each household as a population, the dependence among individuals does not attenuate.  We propose a more targeted approach in this paper for inference focused on pairs of individuals or sampled units; for example, the substance use of one spouse in a shared household, conditioned on the substance use of the other spouse.  We formulate the pseudo likelihood with weights based on pairwise or second order probabilities and demonstrate consistency, removing the requirement for asymptotic independence and replacing it with restrictions on higher order selection probabilities.  Our approach provides a nearly automated estimation procedure applicable to any model specified by the data analyst.  We demonstrate our method on the National Survey on Drug Use and Health.
\end{abstract}

\begin{keyword}
\kwd{Survey sampling}
\kwd{Sampling weights}
\kwd{Quantile regression}
\kwd{Non-linear regression}
\kwd{Markov Chain Monte Carlo}
\end{keyword}

\end{frontmatter}

\section{Introduction} \label{motivation}
The primary interest of the data analyst is to perform inference about a finite population generated from an unknown model, $P_{0}$.   The observed data are collected from a sample taken from that finite population under a known sampling design distribution, $P_{\nu}$, that induces a correlation between the response variable of interest and the inclusion probabilities.  Sampling designs that induce this correlation are termed, ``informative", and the balance of information in the sample is different from that in the population.  \citet{2015arXiv150707050S} proposed an automated approach that formulates a sampling-weighted pseudo posterior density by exponentiating each likelihood contribution by a sampling weight constructed to be inversely proportional to its marginal inclusion probability, $\pi_{i} = P\left(\delta_{i} = 1\right)$, for units, $i = 1,\ldots,n$, where $n$ denotes the number of units in the observed sample.  The inclusion of unit, $i$, from the population, $U$, in the sample is indexed by $\delta_{i} \in \{0,1\}$.   Although we typically expect dependence to be induced among the sampled observations by $P_{\nu}$ - for example, under sampling without replacement - the use of weights composed from first order inclusion probabilities ignores this dependence; hence,
condition $(A5)$ in \citet{2015arXiv150707050S} restricts the class of sampling designs to those where the pairwise dependencies among units attenuate to $0$ in the limit of the population size, $N$, (at order $N$) to guarantee posterior consistency of the pseudo posterior distribution estimated on the sample data, at $P_{0}$ (in $L_{1}$).

While many sampling designs will meet this criterion, many won't; for example, a two-stage clustered sampling design where the number of clusters increases with $N$, but the number of units in each cluster remain relatively fixed such that the dependence induced at the second stage of sampling never attenuates to $0$.  A common example are designs which select households as clusters.

%Government-administered surveys are often used by policy makers to provide evidence that informs policy writing.  Policy focus may be directed to a sub-population that expresses a uniquely important need; for example, a
Researchers and policy makers may be interested in the relationship between the behaviors of individuals living together (such as parents and children or spouses). This creates a sub-population of individuals defined by the behaviors of \emph{other} members of the household, where these joint or conditional behaviors (such as substance use) are only observed through the survey.
%In the case of substance use, the analyst may be interested in substance use among individuals in a household where substances are used by other members of the household.
Substance use, however, is not observed in the population, but only for respondents in the sample.  So the sub-population of interest is constructed by a conditioning event based on the reported substance use of other units in the sample.   Sampling weights defined based on marginal inclusion probabilities are formed using quantities (e.g., size variable(s)) observed for \emph{all units in the population} (within each stage of sampling)
and aren't designed to perform inference on a sub-population defined by information only available from \emph{other units in the sample}.
%%I'm not sure because we use individual's responses (like race) to define domains and use marginal weights to estimate these domains. The difference is that we don't condition on the response of someone else in the sample to make a sub-selection determination for a different unit.

\subsection{Examples}\label{examples}
We next outline some examples of survey programs that employ informative sampling designs under which estimation using sampling weights formed from marginal inclusion probabilities would not be guaranteed to produce a consistent result under \citet{2015arXiv150707050S}.
%do we want to revisit the CE example later?
\textit{Example 1:} The Current Expenditure (CE) survey is administered to U.S. households by the U.S. Bureau of Labor Statistics (BLS) for the purpose of determining the amount of spending for a broad collection of goods and service categories and it serves as the main source used to construct the basket of goods later used to formulate the Consumer Price Index.  The CE employs a multi-stage sampling design that draws clusters of core-based statistical areas (CBSAs), such as metropolitan and micropolitan areas, from which Census blocks and, ultimately, households are sampled.  Economists desire to model the propensity or probability of purchase for a variety of goods and services.
\begin{comment}
The balance of sampled clusters may not be reflective of those in the population; for example, if particularly high income areas are included in the sample.  So inference on purchase propensities for the population made from the observed sample will be biased absent correction for the informative sampling design.
\end{comment}
The CE sampling design is one where the number of clusters drawn increases in the limit of the population size, $N$, but the number of Census blocks per cluster remains relatively fixed such that we do \emph{not} expect an attenuation of the pairwise dependencies (induced within cluster) of the secondary sampling units (Census blocks) such that the pseudo posterior formulated from marginal inclusion probabilities would not be guaranteed to achieve a consistent result under this sampling design.

\textit{Example 2:}  The motivating survey for the pairwise weighting method that we introduce in this paper is the National Survey on Drug Use and Health (NSDUH), sponsored by the Substance Abuse and Mental Health Services Administration (SAMHSA). NSDUH is the primary source for statistical information on illicit drug use, alcohol use, substance use disorders (SUDs), mental health issues, and their co-occurrence for the civilian, non institutionalized population of the United States.  The NSDUH employs a multi-stage state-based design, with the earlier stages defined by geography within each state in order to select households (and group quarters) nested within these geographically-defined PSUs.  Individuals or pairs of individuals are subsequently sampled from selected households.  Viewing each household as a (mini) population, it is clear that the number of individuals residing in a household (of size, $N_{h}$) remains fixed in the limit of $N$, such that there is always unattenuated sampling dependence among those individuals.

Researchers and policy makers may be interested in the substance use of one member of a household -  for example, a household that includes two spouses living together (which we term, a ``spouse-spouse" household) -  based on the substance use of another member of the household (e.g., their spouse), which is only observed in a subset of the sample and not in the entire sample or the population.  Weights constructed on marginal inclusion probabilities may not map back to the sub-population (formed by conditioning on the self-reported behavior of the spouse) under informative sampling of the sub-population because the event (substance use by a spouse in the household) is only observed in the sample, whereas these weights are constructed only from quantities observed in the population.  We illustrate this potential problem of using marginal weights for sub-population inference based on self-reported alcohol use of household members from the NSDUH.

\subsection{Population Model Estimation}
The target audience for this article are data analysts who wish to perform some distributional inference
using data obtained from an informative sample design on a population using a model they specify, $p\left(y_{i}\vert \bm{\lambda}\right),~\bm{\lambda} \in \Lambda$, equipped with density, $p$.  We discuss, in the next section, how the limited literature on this topic does not adequately provide a
general method for making distributional inference on a population model formulated by the data analyst while
adjusting for the unequal probabilities of selection.

In this article, we propose an approach that replaces the pseudo likelihood of \citet{2015arXiv150707050S}, $p\left(y_{i}\vert \delta_{i} = 1,\bm{\lambda}\right)^{w_{i}}$, where the sampling weight, $w_{i} \propto 1/\pi_{i}$, with an approach that incorporates pairwise (second order) inclusion probabilities that provide some information about the dependence among sampled units induced by $P_{\nu}$.  The revised pseudo likelihood we will use in this paper is formed from pairwise terms, $\left[p\left(y_{i}\vert \delta_{i} = 1,\bm{\lambda}\right) \times p\left(y_{j}\vert \delta_{j} = 1,\bm{\lambda}\right)\right]^{w_{ij}}$, for $i,j \in U$, with $w_{ij} \propto 1/\pi_{ij},~\pi_{ij} = P\left(\delta_{i} = 1 \cap \delta_{j} = 1\right)$.  The use of weights constructed from pairwise inclusion probabilities conveys more information about the dependence induced by the joint sampling design distribution among the sampled units.  Our approach retains the attractive feature of \citet{2015arXiv150707050S} of asymptotically unbiased inference for $P_{0}$ under any model specified by the data analyst without altering the geometry of the Markov chain Monte Carlo (MCMC) sampler.  Our new approach also does not require the data analyst to have information about the sampling design, other than the (symmetric) matrix of pairwise inclusion probabilities for the lowest level units. Under many common designs, only a smaller block diagonal subset of this matrix may be needed.
%I'm thinking about household pairs - you only need the joint matrix and roster counts wihtin each household which is still order n versus n^2 for the full joint inclusion matrix
The incorporation of second order inclusion probabilities, however, will broaden the class of sampling designs under which automated inference about $P_{0}$ may be performed by not restricting the pairwise dependence among the sampled units to attenuate to $0$.

\subsection{Review of  Methods to Account for Informative Sampling}
Research activity that incorporates sampling weights built from marginal (or first order) unit inclusion probabilities to estimate population quantities under an informative sample has surged.  Recent works by \citet{dong:2014, wu:2010, kunihama:2014, si2015} incorporate first order sampling weights, but under a single or fixed formulation for the population generation model - typically an empirical likelihood or Dirichlet process mixture for flexibility - with a focus on performing inference about simple population statistics, such as the total and mean.  These approaches focus on design inference, rather than inference from a model of interest specified by the data analyst, the latter of which is our focus in this paper.  \citet{2015arXiv150707050S}, alternatively, formulate a pseudo posterior distribution as a plug-in estimator, using first order sampling weights to allow the data analyst to perform inference from any population generating model that they specify.

As earlier mentioned, \citet{2015arXiv150707050S} define conditions that restrict allowable sampling designs such that frequentist consistency of their pseudo posterior approximation is guaranteed.  One of these conditions requires sampling designs where the pairwise sample inclusion dependencies among units attenuates to $0$ in the limit of the population size.  While they discuss many sampling designs that satisfy this restriction, many do not - such as the CE and NSDUH examples we earlier discussed.
In a similar fashion to this paper, \citet{2016yi} start with a pairwise likelihood construction that incorporates sampling weights based on second order (pairwise) inclusion probabilities to perform (pseudo) maximum likelihood estimation in order to capture second order dependence among sampled units.  However, they construct the population generating model to explicitly match the sampling design, which they restrict to a $2-$ stage sampling design.  So like the recent works using sampling weights composed from first order inclusion probabilities, they require a specific formulation of the population model that does not allow the data analyst to perform inference on a model of their choosing.  By contrast, our approach for incorporating sampling weights based on pairwise inclusion probabilities allows model inference under \emph{a large class of} population generating models specified by the data analyst. We do not require the use of a $2-$ stage sampling design or even that the population model and sampling designs match; rather, in the sequel we will formulate conditions that, together, define a class of sampling designs under which frequentist consistency of our (improper) pseudo posterior approximation is guaranteed.  Our use of second order (pairwise) weights allows us to broaden the class of allowable samplings designs under which our pseudo posterior estimator contracts on the true population generating distribution by eliminating the requirement for pairwise dependencies to attenuate to $0$.

\section{Pairwise Weighting to Account for Informative Sampling}\label{methods}
We begin by constructing the pseudo likelihood and associated pseudo posterior density under any analyst-specified prior formulation on the model,
$\bm{\lambda} \in \Lambda$.

Suppose there exists a Lebesgue measurable population-generating density,
\newline $\pi\left(y\vert\bm{\lambda}\right)$, indexed by parameters,
$\bm{\lambda} \in \Lambda$. Let $\delta_{i} \in \{0,1\}$ denote the sample inclusion indicator for units $i = 1,\ldots,N$ from the population under sampling without replacement.  The density for the observed sample is denoted by, $\pi\left(y_{o}\vert\bm{\lambda}\right) = \pi\left(y\vert \delta_{i} = 1,\bm{\lambda}\right)$, where ``$o$" indicates ``observed".

The following plug-in estimator for the posterior density incorporates sampling weights formulated from pairwise inclusion probabilities under the analyst-specified model for $\bm{\lambda} \in \Lambda$,
\begin{align}
p^{\pi}\left(\bm{\lambda}\vert \mathbf{y}_{o},\mathbf{w}\right) &\propto \left[\mathop{\prod}_{i, j = 1}^{n}\{p\left(y_{o,i}\vert \bm{\lambda}\right)p\left(y_{o,j}\vert \bm{\lambda}\right)\}^{w_{ij}}\right]\pi\left(\bm{\lambda}\right)\\
&\displaystyle\propto  \left[\mathop{\prod}_{i = 1}^{n}\mathop{\prod}_{j \neq i \in S}p\left(y_{o,i}\vert \bm{\lambda}\right)^{w_{ij}}\right]\pi\left(\bm{\lambda}\right)\\
&=  \left[\mathop{\prod}_{i = 1}^{n}p\left(y_{o,i}\vert \bm{\lambda}\right)^{\displaystyle\mathop{\sum}_{j\neq i \in S}w_{ij}}\right]\pi\left(\bm{\lambda}\right)\\\label{rearrange}
&=  \left[\mathop{\prod}_{i = 1}^{n}p\left(y_{o,i}\vert \bm{\lambda}\right)^{\displaystyle w^{\ast}_{i}}\right]\pi\left(\bm{\lambda}\right)
\end{align}
where we have used the independence of the $(y_{i})$, conditioned on $\bm{\lambda}$, under $P_{0}$, to rearrange terms in the product to achieve Equation~\ref{rearrange}, which exponentiates the likelihood contribution of unit $i$ by the sum of the sampling weights, $\{w_{ij} \propto 1/\pi_{ij}\}$, formulated to be inversely proportional to \emph{pairwise} or second order inclusion probabilities.  The collection of pairwise inclusion probabilities that, together, are used to formulate, $w^{\ast}_{i}$, the sampling weighted exponent for unit $i$, represent all pairs by which unit $i$ enters the observed sample.  The sum of the pairwise sampling weights for each unit, $i$, assigns the relative importance of the likelihood contribution for that observation to approximate the likelihood for the population.  We use $p^{\pi}$ to denote the noisy approximation to distribution, $p$, and we make note that the approximation is based on the data, $\mathbf{y}_{o}$ , and sampling weights, $\{\mathbf{w}^{\ast}\}$, confined to those units \emph{included} in the realized sample, $\{i \in U: \delta_{i} = 1\}$, where $U$ denotes a population of units indexed  by $i = 1,\ldots,N$.

The total estimated posterior variance is regulated by the sum of the sampling weights.  We define unnormalized second order weights, $\{w_{ij} = 1/\pi_{ij}\}$, and subsequently normalize them in two steps: 1. Define intermediate summed weight for unit $i$, $\tilde{w}_{i} = \mathop{\sum}_{j\neq i} w_{ij} / (n-1)$,  which we divide by $n-1$ to account for the $n-1$ times that the each $p\left(y_{o,i}\vert \bm{\lambda}\right)$ appears in Equation~\ref{rearrange}; 2. Construct $w^{\ast}_{i} = \frac{n}{\sum_{i=1}^{n} \tilde{w}_{i}} \tilde{w}_{i},~i = 1,\ldots,n$  to sum to the sample size, $n$.

The weight for an individual is formulated by a sum of the components constructed from the multiplication of (inverse) joint probabilities across the sampling stages. We will demonstrate in Section~\ref{simulation} that under a multi-stage sampling design, such as the sampling of individuals within households within geographic segments for our NSDUH example, this sum is dominated by terms that essentially factor due to nearly independent sampling in the earlier stages.  The result is that the pairwise-formulated weight, $w^{\ast}_{i}$, quickly converges (under increasing sample size) to the sampling weight formed from marginal inclusion probabilities, $w_{i}$.  So we will propose and discuss a modification that will reformulate $w^{\ast}_{i}$ to include only a single pairwise term, $w_{ij\vert\ell}$, in the case a single pair is sampled within a household, $\ell$, and individuals $j$ and and $i$ are co-sampled. Our reformulated pairwise weighting scheme treats households as a population of interest. It will be shown to reduce bias relative to the use of weights formulated from marginal (or individual) inclusion probabilities for inference about the behavior of one member of a pair conditioned on the behavior of the other under an informative selection of the conditioning event.
%\citet{2016yi} and their references also formulate the household as a two-stage with the dependencies between first stage units being ignorable. I didn't add this because I want to avoid the `two-stage' setup and instead keep our focus on the population of inference - households and persons within households.
\begin{comment}
; 3. Finally, recover normalized $(\tilde{w}_{ij})$ by re-expanding the $(\tilde{w}_{i})$, as follows:  Let $T : = \sum_{i=1}^{n} w^{\ast}_{i} = 1/(n-1)\mathop{\sum}_{i=1}^{n}\mathop{\sum}_{j\neq i \in S}w_{ij}$, which equals the total of the pairwise weights, $(w_{ij})$, after dividing by $n-1$.  Then $\tilde{w}_{i} = w^{\ast}_{i}/(T/n) = \mathop{\sum}_{j\neq i}w_{ij}/T \times n/(n-1) = \mathop{\sum}_{j\neq i} \tilde{w}_{ij}$, where $\tilde{w}_{ij} = w_{ij} \times \left(1/(n-1)\times n/T\right)$.
\end{comment}

\subsection{Pseudo Posterior Distribution}\label{pseudop}
A sampling design is defined by placing a \emph{known} distribution on a vector of inclusion indicators, $\bm{\delta}_{\nu} = \left(\delta_{\nu 1},\ldots,\delta_{\nu N_{\nu}}\right)$, linked to the units comprising the population, $U_{\nu}$. Choice of design (i.e. specification of the distribution for $\bm{\delta}_{\nu}$) may depend on values from the population which is generated from a hypothetical distribution, $P_{0}$. The sampling distribution is subsequently used to take an \emph{observed} random sample of size $n_{\nu} \leq N_{\nu}$.
%In general, each $\delta_{\nu i}$ is integer-valued, as a population unit may be included multiple times (or not at all) under sampling with replacement.  The statement and discussion of our main result, to follow, is formulated under sampling without replacement, where each population unit may be included only once, so that $\delta_{\nu i} \in \{0,1\}$.
Our conditions needed for the main result, to follow, employ known second-order or pairwise unit inclusion probabilities, $\pi_{\nu ij} = \mbox{Pr}\{\delta_{\nu i} = 1 \cap \delta_{\nu j} = 1\}$ for all $i \neq j \in U_{\nu}$, rather than the marginal inclusion probabilities, $\pi_{\nu i} = \mbox{Pr}\{\delta_{\nu i}=1\}$ for $i\in U_{\nu}$ used in \citet{2015arXiv150707050S}, which are both obtained from the joint distribution over $\left(\delta_{\nu 1},\ldots,\delta_{\nu N_{\nu}}\right)$.  The dependence among unit inclusions in the sample contrasts with the usual $iid$ draws from $P$.  We denote the sampling distribution by $P_{\nu}$.

Under informative sampling, the inclusion probabilities (typically marginal)
are formulated to depend on the finite population data values, $\mathbf{X}_{N_{\nu}} = \left(\mbf{X}_{1},\ldots,\mbf{X}_{N_{\nu}}\right)$.
Since the resulting balance of information would be different in the sample, a posterior distribution for $\left(\mbf{X}_{1}\delta_{\nu 1},\ldots,\mbf{X}_{N_{\nu}}\delta_{\nu N_{\nu}}\right)$ that ignores the distribution for $\bm{\delta}_{\nu}$ will not lead to consistent estimation. In addition, under a complex sampling design with multiple stages, correlations are typically induced among the inclusions for some or all units.

Our task is to perform inference about the population generating distribution, $P_{0}$, using the observed data taken under an informative sampling design.  We account for informative sampling by ``undoing" the sampling design with the weighted estimator,
\begin{equation}\label{pseudopost}
p^{\pi}\left(\mbf{X}_{i}\delta_{\nu i}\right) := p\left(\mbf{X}_{i}\right)^{\frac{1}{(N_{\nu}-1)}\mathop{\sum}_{k\neq i \in U_{\nu}}\frac{\delta_{\nu i}\delta_{\nu k}}{\pi_{\nu ik}}},
\end{equation}
that weights each density contribution, $p(\mbf{X}_{i})$, by the sum of all of its inverse pairwise inclusion probabilities, which together represent all pairwise paths by which unit $i$ may enter a selected sample. The employment of pairwise inclusion probabilities partially accounts for the dependence of among unit inclusions induced by $P_{\nu}$.  The sum of terms for each $i$ is divided by $N_{\nu} - 1$ because each individual is present in $N_{\nu} -1$ population pair terms in the summation, each of which has expectation with respect to $P_{\nu}$ equal to $1$. So the normalization of the summation term ensures that the expectation of the logarithm of the density with respect to $P_{\nu}$ is unbiased. Our construction re-weights the likelihood contributions defined on those units randomly-selected for inclusion in the observed sample ($\{i \in U_{\nu}:\delta_{\nu i} = 1\}$) to \emph{approximate} the balance of information in $U_{\nu}$, from which we construct the associated pseudo posterior,
\begin{equation}\label{inform_post}
\Pi^{\pi}\left(B\vert \mbf{X}_{1}\delta_{\nu 1},\ldots,\mbf{X}_{N_{\nu}}\delta_{\nu N_{\nu}}\right) = \frac{\mathop{\int}_{P \in B}\mathop{\prod}_{i=1}^{N_{\nu}}\frac{p^{\pi}}{p_{0}^{\pi}}(\mbf{X}_{i}\delta_{\nu i})d\Pi(P)}{\mathop{\int}_{P \in \mathcal{P}}\mathop{\prod}_{i=1}^{N_{\nu}}\frac{p^{\pi}}{p_{0}^{\pi}}(\mbf{X}_{i}\delta_{\nu i})d\Pi(P)},
\end{equation}
that we use to achieve our required conditions for the rate of contraction of the pseudo posterior distribution on $P_{0}$.  We note that both $P$ and $\bm{\delta}_{\nu}$ are random variables defined on the space of measures ($\mathcal{P}$ and $ B \subseteq \mathcal{P}$) and possible samples, respectively.  An important condition on $P_{\nu}$ formulated in \citet{2015arXiv150707050S} that guarantees contraction of the pseudo posterior on $P_{0}$ restricts pairwise inclusion dependencies to asymptotically attenuate to $0$.  This restriction narrows the class of sampling designs for which consistency of a pseudo posterior based on marginal inclusion probabilities may be achieved.  We show in the sequel that our use of pairwise inclusion probabilities to formulate sampling weights in the pseudo posterior distribution replaces their condition that requires marginal factorization of the pairwise inclusion probabilities with two conditions that require pairwise factorization of \emph{third} and \emph{fourth} order inclusion probabilities. This expands the allowable class of sampling designs under which frequentist consistency may be guaranteed.  We assume measurability for the sets on which we compute prior, posterior and pseudo posterior probabilities on the joint product space, $\mathcal{X}\times\mathcal{P}$.  For brevity, we use the superscript, $\pi$, to denote the dependence on the known sampling probabilities, $\{\pi_{\nu ij}\}_{i,j \in U_{\nu}}$; for example,
\begin{multline*}
\displaystyle\Pi^{\pi}\left(B\middle\vert \mbf{X}_{1}\delta_{\nu 1},\ldots,\mbf{X}_{N_{\nu}}\delta_{\nu N_{\nu}}\right) := \Pi\left(B\middle\vert \left(\mbf{X}_{1}\delta_{\nu 1},\ldots,\mbf{X}_{N_{\nu}}\delta_{\nu N_{\nu}}\right)\vphantom{\mathop{\sum}_{k\neq 1 \in U_{\nu}}}\right.,\\{}\left.\left(\mathop{\sum}_{k\neq 1 \in U_{\nu}}\pi_{\nu 1k},\ldots,\mathop{\sum}_{k\neq N_{\nu} \in U_{\nu}}\pi_{\nu N_{\nu}k}\right)\vphantom{B}\right).
\end{multline*}

Our main result is achieved in the limit as $\nu\uparrow\infty$, under the countable set of successively larger-sized populations, $\{U_{\nu}\}_{\nu \in \mathbb{Z}^{+}}$.  We define the associated rate of convergence notation, $\order{b_{\nu}}$, to denote $\mathop{\lim}_{\nu\uparrow\infty}\frac{\order{b_{\nu}}}{b_{\nu}} = 0$.

\subsection{Empirical process functionals}\label{empirical}
We employ the empirical distribution approximation for the joint distribution over population generation and the draw of an informative sample that produces our observed data to formulate our results.  Our empirical distribution construction follows \citet{breslow:2007} and incorporates inverse inclusion pairwise probability weights, $\{1/\pi_{\nu ij}\}_{i, j \in U_{\nu}}$, to account for the informative sampling design,
\begin{equation}
\mathbb{P}^{\pi}_{N_{\nu}} = \frac{1}{N_{v}}\mathop{\sum}_{i=1}^{N_{\nu}}\frac{1}{(N_{\nu}-1)}\mathop{\sum}_{k\neq i \in U_{\nu}}\frac{\delta_{\nu i}\delta_{\nu k}}{\pi_{\nu ik}}\delta\left(\mbf{X}_{i}\right),
\end{equation}
where $\delta\left(\mbf{X}_{i}\right)$ denotes the Dirac delta function, with probability mass $1$ on $\mbf{X}_{i}$ and we recall that $N_{\nu} = \vert U_{\nu} \vert$ denotes the size of of the finite population. This construction contrasts with the usual empirical distribution, $\mathbb{P}_{N_{\nu}} = \frac{1}{N_{v}}\mathop{\sum}_{i=1}^{N_{\nu}}\delta\left(\mbf{X}_{i}\right)$, used to approximate $P \in \mathcal{P}$, the distribution hypothesized to generate the finite population, $U_{\nu}$.

We follow the notational convention of \citet{Ghosal00convergencerates} and define the associated expectation functionals with respect to these empirical distributions by $\mathbb{P}^{\pi}_{N_{\nu}}f = \frac{1}{N_{\nu}}\mathop{\sum}_{i=1}^{N_{\nu}}\frac{1}{(N_{\nu}-1)}\mathop{\sum}_{k\neq i \in U_{\nu}}\frac{\delta_{\nu i}\delta_{\nu k}}{\pi_{\nu ik}}f\left(\mbf{X}_{i}\right)$.  Similarly, $\mathbb{P}_{N_{\nu}}f = \frac{1}{N_{\nu}}\mathop{\sum}_{i=1}^{N_{\nu}}f\left(\mbf{X}_{i}\right)$.  Lastly, we use the associated centered empirical processes, $\mathbb{G}^{\pi}_{N_{\nu}} = \sqrt{N_{\nu}}\left(\mathbb{P}^{\pi}_{N_{\nu}}-P_{0}\right)$ and $\mathbb{G}_{N_{\nu}} = \sqrt{N_{\nu}}\left(\mathbb{P}_{N_{\nu}}-P_{0}\right)$.

The sampling-weighted, (average) pseudo Hellinger distance between distributions, $P_{1}, P_{2} \in \mathcal{P}$,
\begin{equation}
d^{\pi,2}_{N_{\nu}}\left(p_{1},p_{2}\right) = \frac{1}{N_{\nu}}\mathop{\sum}_{i=1}^{N_{\nu}}\frac{1}{(N_{\nu}-1)}\mathop{\sum}_{k\neq i \in U_{\nu}}\frac{\delta_{\nu i}\delta_{\nu k}}{\pi_{\nu ik}}d^{2}\left(p_{1}(\mathbf{X}_{i}),p_{2}(\mathbf{X}_{i})\right),
\end{equation}
where $d\left(p_{1},p_{2}\right) = \left[\mathop{\int}\left(\sqrt{p_{1}}-\sqrt{p_{2}}\right)^{2}d\mu\right]^{\frac{1}{2}}$ (for dominating measure, $\mu$).
We need this empirical average distance metric because the observed (sample) data drawn from the finite population under $P_{\nu}$ are no longer independent.  The implication is that our consistency result applies to finite populations generated as $inid$ from which informative samples are taken.  The associated non-sampling Hellinger distance is specified with, $d^{2}_{N_{\nu}}\left(p_{1},p_{2}\right) = \frac{1}{N_{\nu}}\mathop{\sum}_{i=1}^{N_{\nu}}d^{2}\left(p_{1}(\mathbf{X}_{i}),p_{2}(\mathbf{X}_{i})\right)$.

\subsection{Main result}\label{results}
We proceed to construct associated conditions and a theorem that contain our main result on the consistency of the pairwise pseudo posterior distribution under a class of informative sampling designs at the true generating distribution, $P_{0}$.  Our approach extends the main in-probability convergence result of \citet{ghosal2007} by adding new conditions that restrict the distribution of the informative sampling design.  Suppose we have a  sequence, $\xi_{N_{\nu}} \downarrow 0$ and $N_{\nu}\xi^{2}_{N_{\nu}}\uparrow\infty$  and $n_{\nu}\xi^{2}_{N_{\nu}}\uparrow\infty$ as $\nu\in\mathbb{Z}^{+}~\uparrow\infty$ and any constant, $C >0$,

\begin{description}
\item[(A1)\label{existtests}] (Local entropy condition - Size of model)
        \begin{equation*}
        \mathop{\sup}_{\xi > \xi_{N_{\nu}}}\log N\left(\xi/36,\{P\in\mathcal{P}_{N_{\nu}}: d_{N_{\nu}}\left(P,P_{0}\right) < \xi\},d_{N_{\nu}}\right) \leq N_{\nu} \xi_{N_{\nu}}^{2},
        \end{equation*}
\item[(A2)\label{sizespace}] (Size of space)
        \begin{equation*}
        \displaystyle\Pi\left(\mathcal{P}\backslash\mathcal{P}_{N_{\nu}}\right) \leq \exp\left(-N_{\nu}\xi^{2}_{N_{\nu}}\left(2(1+2C)\right)\right)
        \end{equation*}
\item[(A3)\label{priortruth}] (Prior mass covering the truth)
        \begin{equation*}
        \displaystyle\Pi\left(P: -P_{0}\log\frac{p}{p_{0}}\leq \xi^{2}_{N_{\nu}}\cap P_{0}\left[\log\frac{p}{p_{0}}\right]^{2}\leq \xi^{2}_{N_{\nu}} \right) \geq \exp\left(-N_{\nu}\xi^{2}_{N_{\nu}}C\right)
        \end{equation*}
\item[(A4)\label{bounded}] (Non-zero Pairwise Inclusion Probabilities)
        \begin{equation*}
        \displaystyle\mathop{\sup}_{\nu}\left[\frac{1}{\displaystyle\mathop{\min}_{i,k:k\neq i\in U_{\nu}}\vert\pi_{\nu ik}\vert}\right] \leq \gamma \geq 1, \text{  with $P_{0}-$probability $1$.}
        \end{equation*}
\item[(A5)\label{factorthird}] (Bounded Ratio of Third to Second Order Inclusion Probabilities)
        \begin{align*}
        &\displaystyle\mathop{\sup}_{\nu}\mathop{\max}_{i,k,\ell: k\neq\ell\neq i\in U_{\nu}}\biggl\vert\frac{\pi_{\nu ik\ell}}{\pi_{\nu ik}\pi_{\nu i\ell}}\biggl\vert \\
        &= \displaystyle\mathop{\sup}_{\nu}\mathop{\max}_{i,k,\ell: k\neq\ell\neq i\in U_{\nu}}\biggl\vert\frac{\pi_{\nu k\ell\vert i}}{\pi_{\nu k\vert i}\pi_{\nu \ell\vert i}\pi_{i}}\biggl\vert \leq C_{5}, \text{  with $P_{0}-$probability $1$,}
        \end{align*}
        \vskip -0.2in
        where
        \begin{equation*}
        \pi_{\nu k\vert i} = \mbox{Pr}\left(\delta_{\nu k}=1\vert \delta_{\nu i} = 1\right),~
        \pi_{\nu k\ell\vert i} = \mbox{Pr}\left(\delta_{\nu k}=1\cap\delta_{\nu\ell}=1\vert \delta_{\nu i} = 1\right).
        \end{equation*}
\item[(A6)\label{factorfourth}] (Asymptotic Factorization of Fourth Order Inclusion Probabilities)
        \begin{equation*}
        \displaystyle\mathop{\limsup}_{\nu\uparrow\infty} \mathop{\max}_{i,j,k,\ell: i \neq j, k\neq i, \ell\neq j\in U_{\nu}}\left\vert\frac{\pi_{\nu ikj\ell}}{\pi_{\nu ik}\pi_{\nu j\ell}} - 1\right\vert = \order{N_{\nu}^{-1}}, \text{  with $P_{0}-$probability $1$}
        \end{equation*}
        such that for some constant, $C_{4} > 0$,
        \begin{equation*}
        \displaystyle N_{\nu}\mathop{\sup}_{\nu}\mathop{\max}_{i,j,k,\ell: i \neq j, k\neq i, \ell\neq j\in U_{\nu}}\left\vert\frac{\pi_{\nu ikj\ell}}{\pi_{\nu ik}\pi_{\nu j\ell}} - 1\right\vert \leq C_{4}, \text{  for $N_{\nu}$ sufficiently large.}
        \end{equation*}
\item[(A7)\label{fraction}] (Constant Sampling fraction)
        For some constant, $f \in(0,1)$, that we term the ``sampling fraction",
        \begin{equation*}
        \mathop{\limsup}_{\nu}\displaystyle\biggl\vert\frac{n_{\nu}}{N_{\nu}} - f\biggl\vert = \order{1}, \text{  with $P_{0}-$probability $1$.}
        \end{equation*}
\end{description}
The first three conditions are the same as for \citet{2015arXiv150707050S} and restrict the growth rate of the model space (e.g., of parameters) and requires prior mass to be placed on an interval containing the true value.
The next four new conditions impose restrictions on the sampling design and associated known distribution, $P_{\nu}$, which are similar than those specified in \citet{2015arXiv150707050S}, but allow for a wider class of sampling designs under which consistency of the pseudo posterior formulation of Equation~\ref{pseudopost} is guaranteed by replacing the asymptotic attenuation of pairwise inclusion dependencies with restrictions on third and fourth order inclusion dependencies.  Condition~\nameref{bounded} requires the sampling design to assign a positive probability for pairwise inclusion for every pair of units, $i,j \in U_{\nu}$. Since the maximum pairwise inclusion probability is $1$, the bound, $\gamma \geq 1$. This condition is no more restrictive than the analogous condition $A4$ in \citet{2015arXiv150707050S}, which bounds marginal inclusion probabilities away from $0$, in the case that $\Cov\left(\delta_{\nu i},\delta_{\nu j}\right) > 0$, which implies that $\min\{\pi_{\nu i},\pi_{\nu j}\} \geq \pi_{\nu ij}  >  \pi_{\nu i}\pi_{\nu j}$; otherwise, for designs where $\Cov\left(\delta_{\nu i},\delta_{\nu j}\right) < 0$, condition~\nameref{bounded} \emph{is} more restrictive because $\{\pi_{\nu i},\pi_{\nu j}\}  > 0$ does \emph{not} imply $\vert\pi_{\nu ij}\vert > 0 $. All pairs of units must be assigned non-zero pairwise inclusion probabilities. We make note that other than this restriction bounding pairwise inclusion probabilities away from $0$, there is no required attenuation of pairwise dependencies as there is in \citet{2015arXiv150707050S}.  Instead, we add the new condition~\nameref{factorthird} that restricts sampling designs under which the ratio of third order inclusion probabilities to the product of second order inclusion probabilities is absolutely bounded from above.  This ratio approaches the condition of bounding first order inclusion probabilities away from $0$ in the case that the \emph{conditional} pairwise inclusion probabilities asymptotically factor (though such is not required).  Condition~\nameref{factorfourth} requires fourth order inclusion probabilities to factor to pairwise probabilities as $N_{\nu}\uparrow\infty$.  We note the presence of pairwise inclusion probabilities in the denominator for each our conditions \nameref{factorthird} and \nameref{factorfourth}, as contrasted with marginal inclusion probabilities in the analogous condition $A5$ in \citet{2015arXiv150707050S} (which requires asymptotic factorization of pairwise inclusion probabilities).  The conditions of \citet{2015arXiv150707050S} may be viewed as requiring sampling designs that limit to the equivalent to the independent sampling of individual units, while our conditions asymptotically require designs to limit to the independent sampling of \emph{pairs} of individuals. Condition~\nameref{fraction} ensures that the observed sample size, $n_{\nu}$, limits to $\infty$ along with the size of the partially-observed finite population, $N_{\nu}$, such that the variation of information about the population expressed in realized samples is controlled.
\begin{comment}
Yet, we would generally expect pairwise inclusion probabilities for each unit $i$ to decrease at $\order{N_{\nu}-1}$, so that the inverse of the pairwise inclusion probabilities is increasing at $\order{N_{\nu}-1}$. The size of the bound, $\gamma$, is controlled because the condition divides the inverse pairwise probability expression by $N_{\nu} - 1$.  So the restriction operates similarly to the bounding of the first order inclusion probabilities in \citet{2015arXiv150707050S}.
\end{comment}

\begin{theorem}
\label{main}
Suppose conditions ~\nameref{existtests}-\nameref{fraction} hold.  Then for sets $\mathcal{P}_{N_{\nu}}\subset\mathcal{P}$, constants, $K >0$, and $M$ sufficiently large,
\begin{align}\label{limit}
&\mathbb{E}_{P_{0},P_{\nu}}\Pi^{\pi}\left(P:d^{\pi}_{N_{\nu}}\left(P,P_{0}\right) \geq M\xi_{N_{\nu}} \vert \mbf{X}_{1}\delta_{\nu 1},\ldots,\mbf{X}_{N_{\nu}}\delta_{\nu N_{\nu}}\right) \leq\nonumber\\
&\frac{16\gamma^{2}\left[\gamma+C_{3}\right]}{\left(Kf + 1 - 2\gamma\right)^{2}N_{\nu}\xi_{N_{\nu}}^{2}} + 5\gamma\exp\left(-\frac{K n_{\nu}\xi_{N_{\nu}}^{2}}{2\gamma}\right),
\end{align}
which tends to $0$ as $\left(n_{\nu}, N_{\nu}\right)\uparrow\infty$.
\end{theorem}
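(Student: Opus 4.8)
The plan is to follow the posterior-contraction architecture of \citet{ghosal2007}, specialized to the pseudo-posterior of Equation~\ref{inform_post} and to the joint expectation $\mathbb{E}_{P_0,P_\nu}$ taken over both the population draw and the informative sampling draw. First I would write the pseudo-posterior mass on the bad set $A_{N_\nu} := \{P : d^{\pi}_{N_\nu}(P,P_0) \geq M\xi_{N_\nu}\}$ as the ratio of the numerator and denominator integrals in Equation~\ref{inform_post}, and introduce a sequence of test functions $\phi_{n_\nu}$. The expectation then splits in the usual three-way manner: $\mathbb{E}_{P_0,P_\nu}\phi_{n_\nu}$, the expected $(1-\phi_{n_\nu})$-restricted numerator inflated by a reciprocal denominator bound, and the probability that the denominator falls below the threshold $\exp(-(1+2C)N_\nu\xi^2_{N_\nu})$.

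Second, for the denominator I would restrict the prior integral to the Kullback--Leibler neighborhood of Condition~\ref{priortruth} and apply a Chebyshev bound to the weighted log-ratio $Z_{N_\nu} := \mathbb{P}^{\pi}_{N_\nu}\log(p/p_0)$, whose $P_\nu$-mean is the unweighted population average $\mathbb{P}_{N_\nu}\log(p/p_0)$. The crux is a bound on $\Var(Z_{N_\nu})$: expanding the variance produces a quadruple sum over $i,k,j,\ell$ whose cross terms are covariances $\Cov_{P_\nu}(\delta_{\nu i}\delta_{\nu k}/\pi_{\nu ik},\,\delta_{\nu j}\delta_{\nu \ell}/\pi_{\nu j\ell})$, and this is exactly where the sampling conditions act. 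Terms with four distinct indices reduce to $\pi_{\nu ikj\ell}/(\pi_{\nu ik}\pi_{\nu j\ell}) - 1$ and are $\order{N_\nu^{-1}}$ by the fourth-order factorization of Condition~\ref{factorfourth}; terms with three distinct indices reduce to $\pi_{\nu ik\ell}/(\pi_{\nu ik}\pi_{\nu i\ell}) - 1$ and are bounded by the third-to-second-order ratio of Condition~\ref{factorthird}; the diagonal (two-distinct-index) terms carry the weight cap $\gamma$ of Condition~\ref{bounded}. Combined with the second-moment control $P_0[\log(p/p_0)]^2 \leq \xi^2_{N_\nu}$ of Condition~\ref{priortruth}, these give $\Var(Z_{N_\nu}) = \order{(\gamma + C_3)\xi^2_{N_\nu}/N_\nu}$, with the $\gamma$ piece from the diagonal variance and $C_3$ collecting the constants $C_4$ and $C_5$; the Chebyshev bound on the denominator-failure event then delivers the leading term $16\gamma^2[\gamma + C_3]/((Kf + 1 - 2\gamma)^2 N_\nu\xi^2_{N_\nu})$.

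Third, for the numerator and the tests I would invoke the local entropy Condition~\ref{existtests} to build, via a covering of $A_{N_\nu}\cap\mathcal{P}_{N_\nu}$ and the Le Cam--Birg\'e Hellinger testing affinity, tests $\phi_{n_\nu}$ with exponentially small type I error and uniformly exponentially small type II error against $d^{\pi}_{N_\nu}$-separated alternatives. Because only co-sampled pairs contribute nonzero terms to $d^{\pi,2}_{N_\nu}$ and each inverse pairwise weight is capped by $\gamma$, the testing exponent scales as $n_\nu\xi^2_{N_\nu}/\gamma$, and Condition~\ref{fraction} guarantees $n_\nu$ grows proportionally to $N_\nu$; this produces the exponential term $5\gamma\exp(-Kn_\nu\xi^2_{N_\nu}/(2\gamma))$. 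The mass on $\mathcal{P}\setminus\mathcal{P}_{N_\nu}$ outside the sieve is absorbed by the trimming Condition~\ref{sizespace}, whose exponent $2(1+2C)$ is calibrated to be dominated by the square of the denominator lower bound. Assembling the three pieces — the test level, the inflated expected numerator, and the denominator-failure probability — and using the reciprocal denominator bound $\exp((1+2C)N_\nu\xi^2_{N_\nu})$ yields Equation~\ref{limit}, which tends to $0$ since $N_\nu\xi^2_{N_\nu}\uparrow\infty$ and $n_\nu\xi^2_{N_\nu}\uparrow\infty$.

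The main obstacle will be the variance computation of the second step. The quadruple sum must be partitioned precisely by the number of coincident indices, with each class matched to the correct inclusion-probability condition, and one must verify that the surviving contributions are genuinely $\order{N_\nu^{-1}}$ after dividing by the $N_\nu^2(N_\nu-1)^2$ normalization rather than merely bounded. This is the step in which the pairwise weighting earns its keep: it replaces the asymptotic attenuation of pairwise dependence required by \citet{2015arXiv150707050S} with the weaker third- and fourth-order Conditions~\ref{factorthird} and~\ref{factorfourth}, and getting the orders and constants right here is what certifies contraction without forcing pairwise dependence to vanish.
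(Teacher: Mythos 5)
Your proposal is correct and follows essentially the same route as the paper: the Ghosal--van der Vaart three-way split into tests, restricted numerator, and denominator-failure event, with the denominator handled by Chebyshev on $\mathbb{G}^{\pi}_{N_{\nu}}\log(p/p_{0})$ and the quadruple sum partitioned by coincident indices so that the four-, three-, and two-distinct-index classes are matched to conditions \nameref{factorfourth}, \nameref{factorthird}, and \nameref{bounded} respectively, yielding $\gamma + C_{3}$ exactly as in the paper's Lemma~\ref{denominator}, while the testing/numerator side with the $\gamma$-capped pseudo-Hellinger distance reproduces Lemma~\ref{numerator}. The only cosmetic discrepancy is your denominator threshold exponent $(1+2C)$ versus the paper's $(1+C)$ in Lemma~\ref{denominator}, which is a calibration choice that does not change the argument.
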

\begin{proof}
The proof follows exactly that in \citet{2015arXiv150707050S} where we bound the numerator (from above) and the denominator (from below) of the expectation with respect to the joint distribution of population generation and the taking of a sample of the pseudo posterior mass placed on the set of models, $P$, at some minimum pseudo Hellinger distance from $P_{0}$.  We replace their condition (A4), which bounds the inverse of marginal inclusion probabilities, with our condition ~\nameref{bounded}, that now bounds the inverse of pairwise inclusion probabilities.   We reformulate two enabling lemmas of \citet{2015arXiv150707050S}, which we present in an Appendix, where the reliance on (their) condition (A5) requiring asymptotic factoring of pairwise unit inclusion probabilities is here replaced by conditions ~\nameref{factorthird} and ~\nameref{factorfourth} that require asymptotic pairwise factoring of fourth order inclusion probabilities and boundedness in the ratio of third-to-second order inclusion probabilities.
\end{proof}

We note that the rate of convergence is decreased for a sampling distribution, $P_{\nu}$, that expresses a large variance in unit pairwise inclusion probabilities such that $\gamma$ will be relatively larger. Samples drawn under a design that expresses a large variability in the second order sampling weights will express more dispersion in their information relative to a simple random sample of the underlying finite population.  We construct $C_{3} = C_{4} + C_{5} + 1$, such that to the extent that the third and fourth order dependencies attenuate faster than the pairwise inclusion probabilities under the pseudo posterior constructed from first order sampling weights, then the rate of contraction will be faster under our formation than in \citet{2015arXiv150707050S}.  In general, however, one would not necessarily expect a more rapid contraction under our employment of second order inclusion probabilities to form our sampling weights because the rate in both \citet{2015arXiv150707050S} and here is nearly optimal, as we may observe by plugging in for the rate, $\xi_{N_{\nu}} = \log n_{\nu}/\sqrt{n_{\nu}}$ - the optimal convergence rate reduced by a log factor -  and noting that the bound in Equation~\ref{limit} limits to $0$.  The main benefit of our approach is that it is expected to broaden the class of sampling designs (relative to \citet{2015arXiv150707050S}) under which the associated pseudo posterior distribution achieves a frequentist consistency result.

\section{Population Model}\label{sec:model}
We construct a population model to address our inferential interest of assessing the functional form of the relationship between frequency of alcohol consumption and age at conditional quantiles of interest for the population distribution of the U.S., as estimated from the 2014 National Survey on Drug Use and Health (NSDUH).

We follow \citet{Reed09apartially} and formulate a likelihood for each observation using the asymmetric Laplace (AL) distribution,
\begin{equation} \label{pop_like}
y_{i}\mid \mu_{i},\tau, q \ind \mathcal{AL}\left(\mu_{i}, \tau, q\right),~ i = 1,\ldots,N
\end{equation}
where $\tau$ is a precision parameter and $q\in (0,1)$ is the quantile of interest.  We recall the AL density for observed response, $y$,
\begin{equation}
p\left(y\mid \mu, \tau, q\right) = \tau  q(1-q) \exp\left(-\tau \rho_{q}(y - \mu)\right),
\end{equation}
where
\begin{equation}
   \rho_{q}\left(u\right) :=
    \begin{cases}
      q|u|, & \text{if } u \geq 0 \\
      \left(1-q\right)|u|, & \text{if } u <0
    \end{cases}
\end{equation}

To accommodate expected non-linearity in the relationship of age with the distribution for alcohol consumption, we specify a B-spline basis term,
\begin{equation}
\bm{\mu} = \mathbf{B}\bm{\theta}
\end{equation}
for $N\times (d+k)$ B-spline basis matrix, $\mathbf{B}$, that we extend as in \citet{quteprints72987}, to convert the B-spline to a penalized (P-) spline of order $k$ with employment of a penalty matrix, $\mathbf{Q} = \mathbf{D}^{'}\mathbf{D}$, where $\mathbf{D}$ has $d+k$ columns for a B-spline basis with $d$ knots and is the discretized $k^{\mbox{\tiny{th}}}$ difference operator.  Higher values for $k$ enforce greater smoothness restrictions in each B-spline piecewise basis (column of $\mathbf{B}$) under the following multivariate Gaussian prior for $\bm{\theta}$,
\begin{equation}
p\left(\bm{\theta}\mid \lambda\right) \propto \exp\left(-\frac{\lambda}{2}\bm{\theta}^{'}\mathbf{Q}\bm{\theta}\right),
\end{equation}
where $(d+k)\times (d+k)$ penalty (precision) matrix, $\mathbf{Q}$ is of rank, $d$, in a similar fashion as the intrinsic conditional autoregressive prior \citep{rue:held:2005}.  Parameter, $\lambda$, is the smoothing, penalty parameter on which we impose a further $\mathcal{G}\left(1,1\right)$ prior, specified with small hyperparameter settings easily overwhelmed by the data.  We choose $d = 10$ and $k=3$ such that each spline basis lies in the space of piecewise $C^{3}$ functions. Precision parameter, $\tau$, from Equation~\ref{pop_like} also receives a $\mathcal{G}\left(1,1\right)$ prior.

We formulate the logarithm of the sampling-weighted pseudo likelihood for estimating $(\bm{\mu},\tau,\lambda)$ from our observed data for the $ n\leq N$ sampled units,
\begin{align}\label{pseudo_like}
\log\left[\mathop{\prod}_{i=1}^{n} p\left(y_{i}\mid \mu_{i},\tau,q\right)^{w^{\ast}_{i}}\right] &= \mathop{\sum}_{i=1}^{n}w^{\ast}_{i}\log p\left(y_{i}\mid \mu_{i},\tau,q\right)\\
&= w^{\ast}_{\mbox{\tiny{TOT}}}\left[\log \tau + \log q + \log (1-q)\right] \nonumber\\ &-\tau\mathop{\sum}_{i=1}^{n}w^{\ast}_{i}\rho_{q}\left(y_{i}-\mu_{i}\right),
\end{align}
where $\displaystyle w^{\ast}_{\mbox{\tiny{TOT}}} = \mathop{\sum}_{i=1}^{n}w^{\ast}_{i}$, with sampling weights, $w^{\ast}_{i}$, as defined using joint inclusion probabilities for unit $i$ in Section~\ref{pseudop} or, alternatively, using marginal inclusion probabilities as in \citet{2015arXiv150707050S}, to support our comparison of alternative weighting schema.  We recall that we have normalized the sum of the weights such that $w^{\ast}_{\mbox{\tiny{TOT}}} = n$.  Finally, we estimate the joint posterior distribution using Equation~\ref{pseudo_like}, coupled with our prior distributions assignments, using the NUTS Hamiltonian Monte Carlo algorithm implemented in Stan \citep{stan:2015}.

%\clearpage
\section{Simulation Study}\label{simulation} %updated with M = 200 - results are the same

\subsection{Scenarios}%I wonder if we can streamline and remove our ``expectations'' from these. It might be tricky because we want to motivate why we are doing it, but we also don't want to be too confusing before showing results.
We begin by abstracting the five-stage, geographically-indexed NSDUH sampling design \citep{MRB:Sampling:2014} to a simpler, three stage design (of \{area segment, household, individual\}) that we use to draw samples from a synthetic population in a manner that still generalizes to the NSDUH (and similar multi-stage sampling designs where the number of last stage units does not grow with overall population size).   We focus our inference on the case of analyzing (some conditional quantile for) alcohol usage for a sub-population that is formed by conditioning the inclusion of a sampled individual in a spouse-spouse household based on the self-reported frequency of alcohol usage by their spouse.   We construct three scenarios, where each targets a sub-population, under which we will compare the estimation performances (through bias and mean square error) of marginal versus pairwise weighting schema.  These scenarios will be used for both our simulation study and following application to NSDUH:
\begin{description}
\item[(S1)\label{subpop}] A sub-population target for inference is defined by those individuals who reside in a particular household configuration.  For ease-of-understanding and to tie back to inference on the NSDUH, let's suppose the household configuration of interest is spouse-spouse pairs. So we only include the sub-sample of individuals who reside in a spouse-spouse pair for model estimation (regardless of whether their spouse is also included in the sample).  This sub-population is formed using information observed in the (household) population; e.g., the household roster provides information on whether someone resides with a spouse.
\item[(S2)\label{obssubpop}] Our focus for inference continues to be the sub-population of individuals living in spouse-spouse pairs, but we more narrowly include a smaller sub-sample of individuals drawn from spouse-spouse pairs where (responses for) both are \emph{observed} in the sample.  The smaller sub-sample of spouse-spouse pairs that results from including only those spouses mutually observed will be designed to be informative; that is, the age distribution for individuals who are jointly observed with their spouses in the sub-sample will be different than the age distribution of individuals in the larger sample whose spouses are not co-included in the sample.  In practice, the data analyst would not use this contrived sub-sample since the larger sub-sample of scenario~\nameref{subpop} maps back to the same sub-population of interest.  We include this scenario both because it allows us to compare how the pairwise and marginal weighting methods adjust for informative sub-sampling and also because it sets the stage for further constricting the sub-population of interest by conditioning on an event only observed in the sample.
\item[(S3)\label{condsubpop}] The sub-population of interest is further restricted to those spouse-spouse households where one spouse consumes alcohol above (and/or below) some threshold level frequency.  This sub-population is defined based on a conditioning event (the level of alcohol consumption by one member of a spouse-spouse pair) and the condition is \emph{not} observed in the (household) population.  By construction, the associated sub-sample for each conditioning event of interest would be a subset of the sub-sample included in scenario~\nameref{obssubpop} because the conditioning event is restricted to be observed \emph{only} in the pair sample.
\end{description}

\subsection{Population Generation}
We simulate a population of N = 6000, with 200 primary sampling units (PSUs) each containing 10 households (HHs) which each contain 3 individuals (P1, P2, P3).
The response $y$ is drawn from an AL distribution with $q = 0.5$. We choose $\tau = 8$ to yield a relatively precise response. We let $\mu$ depend on two predictors $x_1$ and $x_2$. The variable $x_1$ represents the observed information available for analysis, whereas $x_2$ represents information available for sampling, which is either ignored or not available for analysis. The $x_1$ and $x_2$ distributions for P1 and P3 are $\mathcal{N}(0,1)$ and $\mathcal{E}(r =1/5)$ with rate $r$, where $\mathcal{N}(\cdot)$ and $\mathcal{E}(\cdot)$ represent normal and exponential distributions, respectively. The size measure used for sample selection is  $\tilde{x}_{2,ijk} = x_{2,ijk} - \min (x_{2,ijk}) + 1$ for $i = 1, \ldots, 3$ individuals, $j = 1, \ldots 10$ HHs, and $k = 1, \ldots, 200$ PSUs. The conditional quantile for P1 and P3 ($i = 1,3$) within each HH $j$ and PSU $k$ is
\[
\mu_{i} = 10 + 1 x_{1,i} + 0.5 x_{2,i} + 0.5 x_{1,i} x_{2,i} - {x_{1,i}^2}
\]

P2 is given a distribution for $x_{2,2jk}$ that depends on P1's value for $x_{2,1jk}$ within HH $j$.
\begin{comment}
The scale parameter $\mu_{2jk}$ depends on the $x_{2,1jk}$ values for P1 in HH $j$ with varying coefficients for $x_2$ and $x_1 x_2$ interaction.
\end{comment}
Within each PSU, $k$, and HH, $j$, the distributions for $x_2$ ($i = 2$) are
$x_{2,2}|x_{2,1} \sim \mathcal{E}(1/x_{2,1})$, so $E(x_{2,2}|x_{2,1}) = x_{2,1}$.  The distribution for $\mu_2$ is further set to depend on whether the $x_2$ value for P1 in the same HH is higher or lower than the median $Q_{0.5}$ of $x_2$ among the population of P1s.
\[
\mu_{2} = \left\{
	\begin{array}{ll}
		10 + 1 x_{1,2} + 0.25 x_{2,2} + 0.25 x_{1,2} x_{2,2} - 2 x_{1,2}^2 ;& x_{2,1} \le Q_{0.5}(x_{2,1jk})\\
		10 + 1 x_{1,2} + 0.75 x_{2,2} + 0.75 x_{1,2} x_{2,2}; &  x_{2,1} > Q_{0.5}(x_{2,1jk})
	\end{array}
\right.
\]
In terms of P2-P1 and P2-P3 pairs within each HH, there are now different distributions for both the outcome $y_{2,jk}$ (via conditional $\mu_2$) and the joint selection probability (via conditional size $x_{2,2jk}|x_{2,1jk}$) even though the marginal distributions for outcomes $y_{1jk}$, $y_{3jk}$ and size measures $x_{2,1jk}$, $x_{2,3jk}$ are the same.  The conditioning of the value of the size variable for P2 on that for P1, in each household, together with constructing the form for the conditional quantile for P2, $\mu_{2}(x_{1,2},x_{2,2})$, based on thresholding the value of size variable, $x_{2,1}$, for P1, instantiates an informative sampling design of the sub-population of P2 individuals conditioned on the response values for P1 individuals.

Even though the population response $y$ was simulated with $\mu = f(x_1,x_2)$, we estimate the marginal models at the population level for $\mu = f(x_1)$ as described in section \ref{sec:model}. This exclusion of $x_2$ is analogous to the situation in which an analyst does not have access to all the sample design information and ensures that our sampling design instantiates informativeness (where $y$ is correlated with the selection variable, $x_{2}$, that defines inclusion probabilities). In particular, we estimate the models under each of three scenarios and compare the population fitted models, $\mu = f(x_1)$, to those from the samples.

\subsection{Sampling from the Population}
For the simulation, the number of selected PSUs was varied $K \in \{10, 20, 40, 80, 160\}$, the number of HHs within each PSU was fixed at 5, and the number of selected individuals within each HH was 2 (a pair). Each setting was repeated $M = 200$ times. Details for the selection at each stage follows:
\begin{enumerate}
	\item For each PSU indexed by $k$, an aggregate size measure $X_{2,k} = \sum_{ij} x_{2,ij|k}$ was created summing over all individuals $i$ and HHs $j$ in PSU $k$. PSUs are then selected proportional to this size measure based on Brewer's PPS algorithm \citep{BrewerPPS}.
	\item Once PSUs are selected, for each HH within the selected PSUs indexed by $j$ an aggregate size measure $X_{2,j|k} = \sum_{i} x_{2,i|jk}$ was created summing over all individuals $i$ within each HH in the selected PSUs. HHs are selected independently across PSUs. Within each PSU, HHs are selected proportional to size based on Brewer's PPS algorithm.
	\item Within each selected HH, a pair of persons (2 out of P1, P2, P3) is selected jointly. Firstly, all $ {3 \choose 2} = 6$ pairs are given a size equal to the sum of the individual size measures. So $X_{2,ii'|jk} = x_{2,i|jk}+ x_{2,i'|jk}$. Secondly, a single pair is then directly selected with probability proportional to this size measure. Individual (marginal) probabilities of selection for each of P1, P2, and P3 can be computed directly from the 6 pair inclusion probabilities.
\end{enumerate}

\subsection{Calculation of Weights}
From the three stages of sampling there are four weight components available to use:
\begin{enumerate}
\item PSU: $w_1^{k} = 1/\pi_k$, the inverse of the probability of selecting PSU $k$.
\item HH: $w_2^{j|k} = 1/\pi_{j|k}$, the inverse of the conditional probability of selecting HH $j$ given PSU $k$ has been selected.
\item Individual: $w_3^{i|jk} = 1/\pi_{i|jk}$, the inverse of the conditional probability of selecting individual $i$ given HH $j$ and PSU $k$ are selected.
\item Pairwise: $w_3^{i,i'|jk} = 1/\pi_{i,i'|jk}$, the inverse of the joint probability of selecting individuals $i$ and $i'$ as a pair in HH $j$ given the household and PSU $k$ are selected.
\end{enumerate}
In general, each stage could have two sets of weights from both first and second order components, but for this example the first two stages are sampled via PPS and thus their joint probabilities of selection within each stage are considered negligible.

Based on these four weight components, the first order or marginal weight is simply the inverse of the probability of selecting an individual: $w_{i}^{(1)} = w_1^{k} w_2^{j|k}w_3^{i|jk}$.  For second order weights, we set the HH as the unit of analysis and construct each pairwise weight \emph{within} HH for individual, $i$: $w^{(2p)}_{i} =w_1^{k} w_2^{j|k}w_3^{i,i'|jk} /(N_{p_j} -1)$, where $i^{'}$ is the co-sampled individual in HH, $j$, that includes units, $(i,i^{'})$.  We normalize by the number of pairs in the domain of interest within each household, $(N_{p_j})$, because each roster of the HH is treated as a population (and the entire population is constructed as the collection of household populations). Full pairwise (second order) weights, by contrast, are constructed by summing the inverse pairwise inclusion probabilities across \emph{all} individuals in the sample included with $i$ to focus on the entire population (across the collection of household populations) of size, $N$, as the unit of analysis.  Figure \ref{fig:weightviolin} compares the distribution of sampling weights under marginal, full pairwise and within-household pairwise weighting, from left-to-right, within each plot panel.  The panels in each column present distributions for realized samples of increasing size, from left-to-right.  The rows compare the weight distributions for all $\mbox{P}2$ units, P2 units where P1's response $<$ 10, and P2 units where P1's response $\ge$ 10, from top-to-bottom.

We observe that while the weight distributions are highly similar for marginal and full pairwise weighting, on the one hand, there are notable differences between the first two and household pairwise weighting, on the other hand.  Figure~\ref{fig:weightratio} plots the distributions for the \emph{ratio} of full and within household pairwise weight, to better understand the differences in their underlying distributions.  Taken together, both figures reveal that only the household pairwise weights actually redistribute the marginal weight, whereas the full sample (second order) pairwise weights quickly collapse to the marginal weights. The full pairwise weights converge to the marginal weights because the majority of terms in each summation to construct a weight value for each individual are from pairings across different PSUs and HHs. These terms are dominated by the early, nearly independent sampling stages and thus the small number (only one for pair samples) of within HH components provide negligible contributions to the sum. See Appendix \ref{sec:weightcalc} for more discussion on the formulation of the full and household pairwise weights.

\subsection{Results}
For simplicity and scalability to small sample sizes, we model both the population and sample using $d = 5$ knots and polynomials of degree $k = 2$. Each column of Figure \ref{fig:P2allsamp} displays the fitted curves, bias and mean square error (MSE) for scenario \nameref{subpop} that includes the full sample of P2's for a particular average sample size. Both weighting methods remove the bias compared to the equal weighting. The first order or marginal weights show a slight, but persistent edge in MSE likely due to less variability in the marginal weights (See Figures \ref{fig:weightviolin} and \ref{fig:weightratio}).

Figures \ref{fig:P2condPgt} and \ref{fig:P2condPlt} present results under scenario \nameref{condsubpop}, where the P1-P2 sub-population of interest is \emph{conditioned} on whether the observed response, $y$, for P1 is above or below $10$ (a value which is close to the median of $y$), respectively.  The household pairwise weights remove more bias and lead to smaller MSE than do the marginal weights because the conditioning event on the $y$ value for P1 is only observed in the P1-P2 pair sub-sample, but not in the full P2 sample or the (household) population.
%less about adapting - more about the subsampling process of pairs being informative to the outcome? It's adaptive in that different pairs (P1-P2, P2-P3, P1-P3) would have different weights. But further conditioning within these pairs based on their joint responses would have the same weights much like subsetting a sample based on an individual's response (say race or income) would still use the same marginal weights and lead to unbiased results.
\begin{comment}Each individual household pairwise weight is constructed from the other member of the pair included in the target sub-sample (e.g., the weight for each P2 is formed with the P1 whose response value meets the conditioning criterion), so that its value adapts to a particular sub-sample formed by a conditioning event observed in the sample, itself.
\end{comment}%%I'm not sure this is correct, but we'd have to have more than 1 P1-P2 pair per household to check. The subsample formed from the sampling design is all we can adapt to (P2-P1 pairs vs. P2-P3 pairs). The conditioning event is only observed in this `designed' subsample. The further subsetting based on the observed responses does not receive any different weights.
Therefore, the computed value of each household pairwise weight is able to adjust for informative sub-sampling to more fully remove bias than marginal weighting. By contrast, the marginal weight for each individual is constructed based on quantities observed in the entire population, so it does not change or adapt to the particular sub-sample needed to study a conditioning event not observed in the population; that is, the marginal weight for each individual is fixed to the same value for every sample or pair sub-sample that includes this individual.  While the marginal weighting scheme does demonstrate a notable improvement in estimation bias and MSE compared to the unweighted case, much of this improvement may be due to the informativeness of the first two stages of PPS sampling. Using the marginal weights may lead to different, and potentially incorrect, inferential conclusions about the sub-population, as we will demonstrate in the Application section \ref{sec:NSDUH} that follows.

We also realize an improvement in bias and MSE performance under household pairwise weighting as compared to marginal weighting for scenario \nameref{obssubpop}, as shown in Figure~\ref{fig:P2margP1}. Since this P2 outcome model is \emph{not} formed under a conditioning event, one might expect marginal weighting to perform similarly to pair weighting for inference about the P2 sub-population; however, since the joint response distributions and joint selection probabilities of the P1-P2 and P2-P3 pairs differed, the P2 sub-sample selected in a P1-P2 pair differed from that selected in P2-P3 pairs. Therefore the marginal weights for P2 do not fully adjust for this additional sub-selection.

%violinplots for example data sets (out of 200)
\begin{figure}
\centering
%All P2
\includegraphics[width = 0.95\textwidth,
		page = 1,clip = true, trim = 0in 0in 0in 0in]{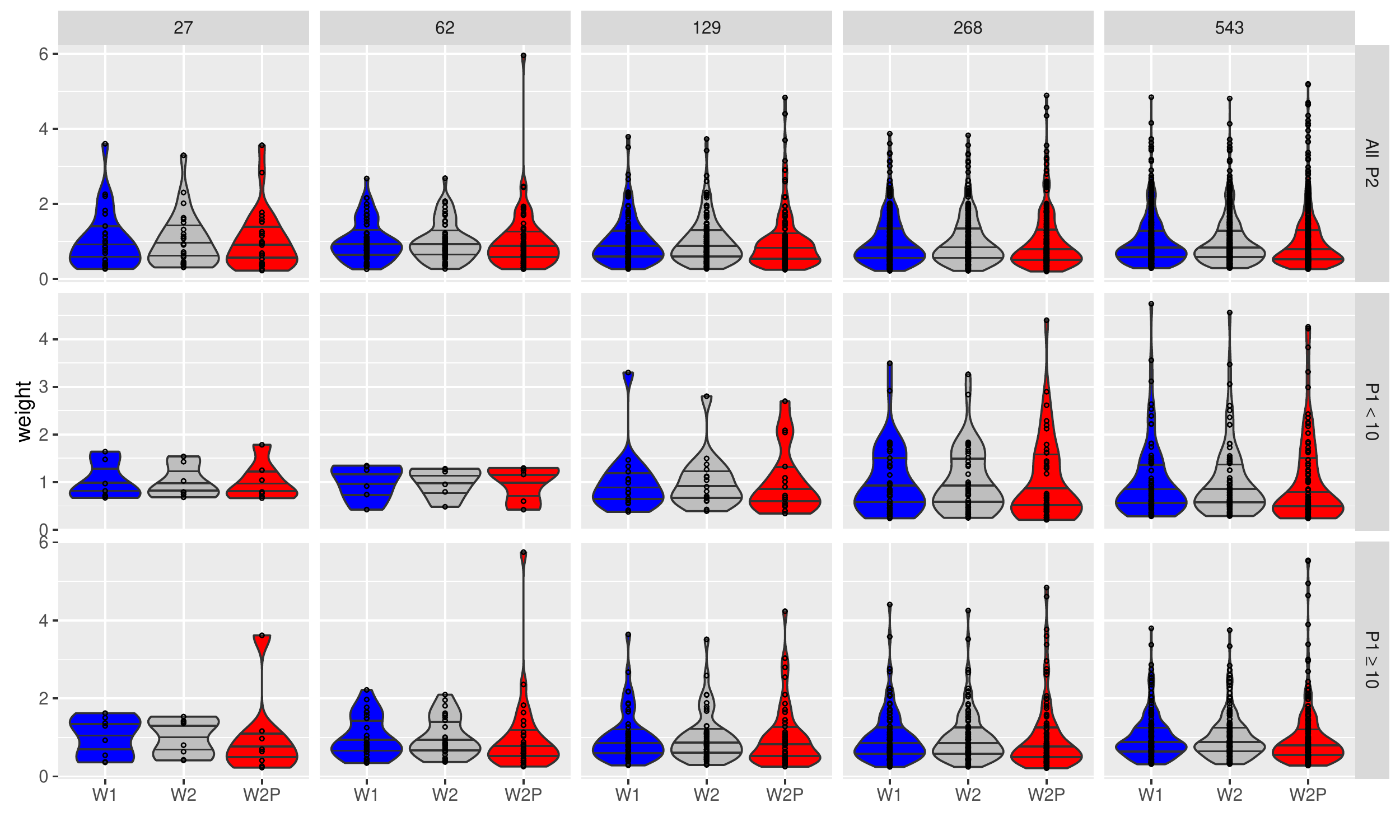}
\caption{Distributions of marginal (blue), full second order (grey), and within HH second order (red) weights for P2 across realizations of different sample sizes (column heading) and by subdomain (top to bottom) all P2, P2 where P1's response $<$ 10, P2 where P1's response $\ge$ 10}
\label{fig:weightviolin}
\end{figure}

\begin{figure}
\centering
%All P2
\includegraphics[width = 0.95\textwidth,
		page = 2,clip = true, trim = 0in 0in 0in 0in]{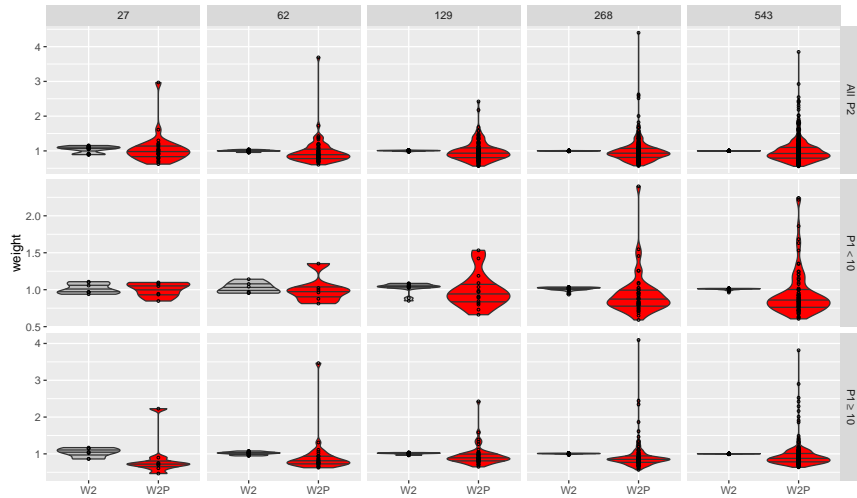}
\caption{Distributions of the ratio of full second order (grey) and within HH second order (red) weights to marginal weights for P2 across realizations of different sample sizes (column heading) and by subdomain (top to bottom) all P2, P2 where P1's response $<$ 10, P2 where P1's response $\ge$ 10}
\label{fig:weightratio}
\end{figure}

\begin{figure}
\centering
\includegraphics[width = 0.95\textwidth,
		page = 1,clip = true, trim = 0in 0in 0in 0in]{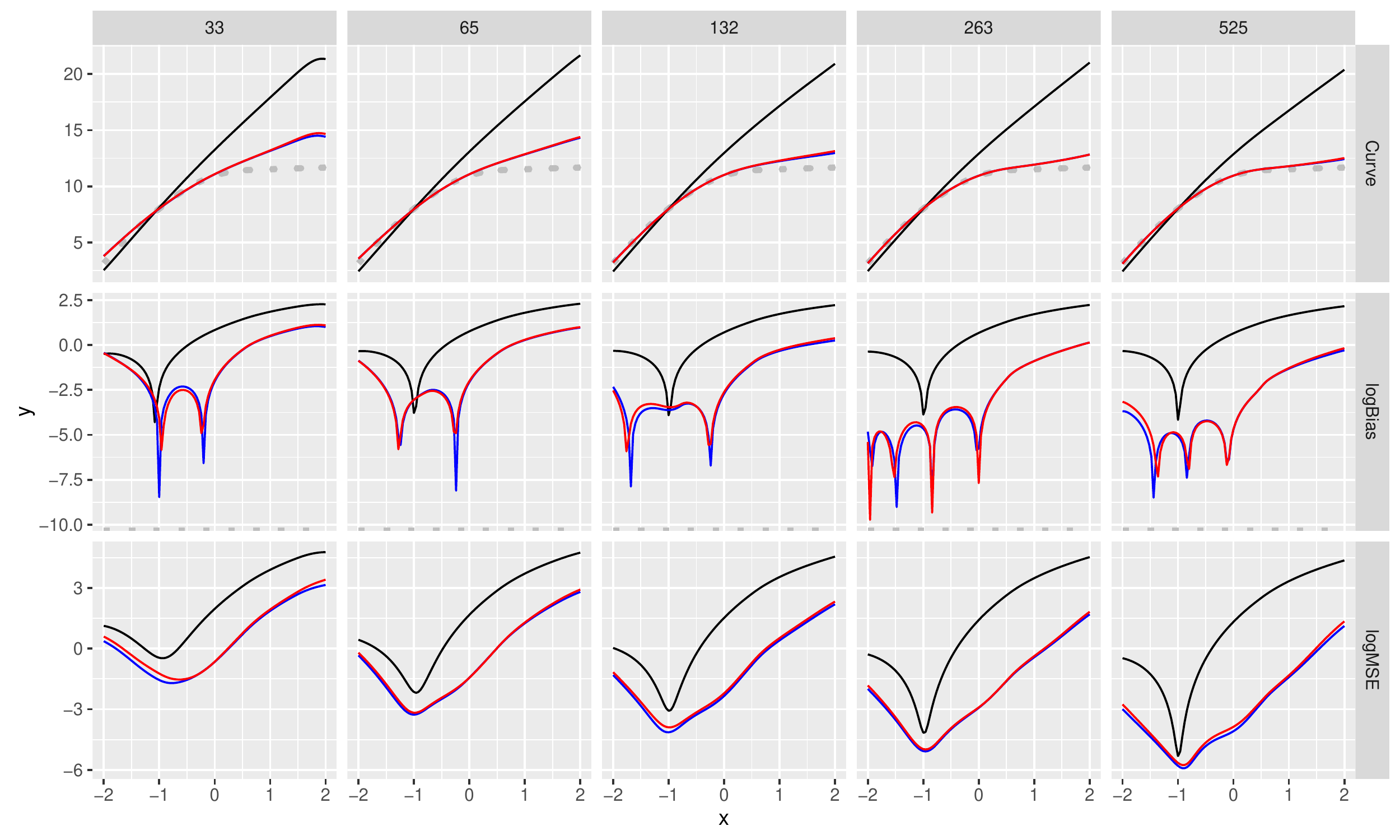}
\caption{The marginal estimate of $\mu = f(x_1)$ for P2 using the full sample under scenario~\nameref{subpop}. Compares the (true) population curve (broken grey) to the whole sample with equal weights (black), final analysis or `marginal' weights (blue), and household pairwise or `second order' weights (red). Top to bottom: estimated curve, log of absolute bias, log of mean square error. Left to right: doubling of sample size for whole sample (100 to 1600). }
\label{fig:P2allsamp}
\end{figure}
\FloatBarrier
\begin{figure}
\centering
\includegraphics[width = 0.95\textwidth,
		page = 1,clip = true, trim = 0in 0in 0in 0in]{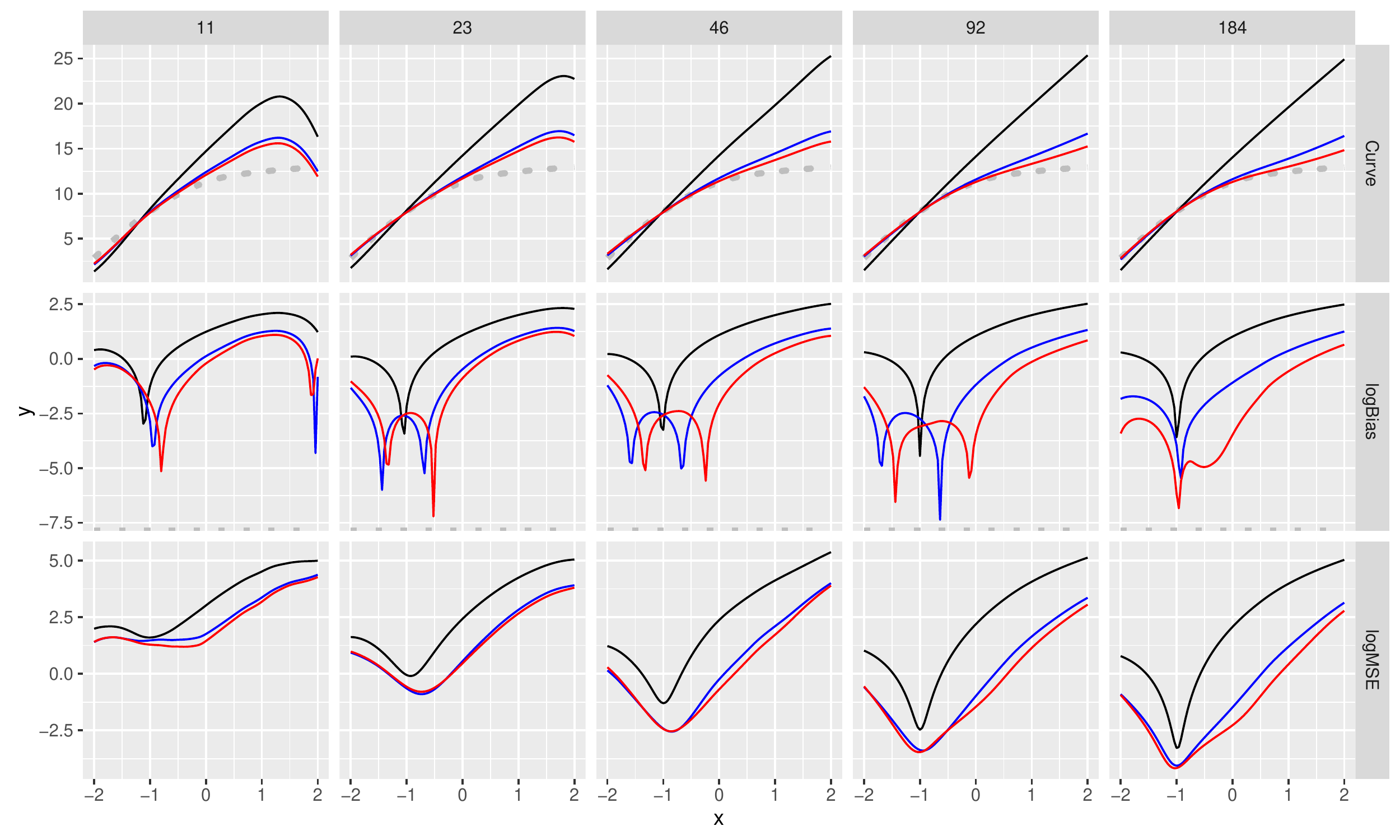}
\caption{The conditional estimate of $\mu = f(x_1)$ for P2 given observed $y \ge 10$ for P1 under scenario~\nameref{condsubpop}. Compares the (true) population curve (broken grey) to the  P1-P2 pair sample with equal weights (black), final analysis or `marginal' weights (blue), and household pairwise or `second order' weights (red). Top to bottom: estimated curve, log of absolute bias, log of mean square error. Left to right: doubling of sample size for whole sample (100 to 1600). }
\label{fig:P2condPgt}
\end{figure}
\FloatBarrier

\begin{figure}
\centering
\includegraphics[width = 0.95\textwidth,
		page = 1,clip = true, trim = 0in 0in 0in 0in]{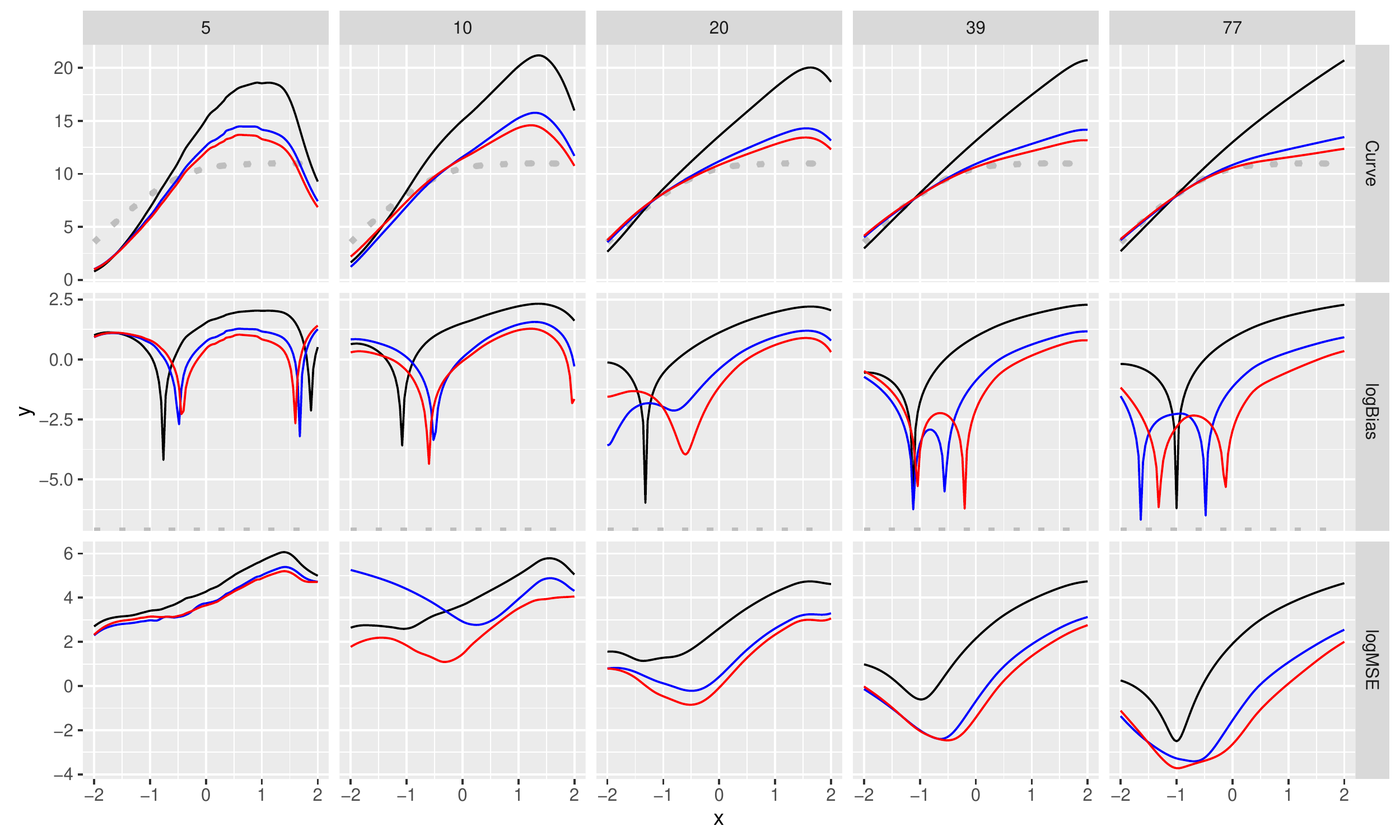}
\caption{The conditional estimate of $\mu = f(x_1)$ for P2 given observed $y < 10$ for P1 under scenario~\nameref{condsubpop}. Compares the (true) population curve (broken grey) to the  P1-P2 pair sample with equal weights (black), final analysis or `marginal' weights (blue), and household pairwise or `second order' weights (red). Top to bottom: estimated curve, log of absolute bias, log of mean square error. Left to right: doubling of sample size for whole sample (100 to 1600). }
\label{fig:P2condPlt}
\end{figure}
\FloatBarrier

\begin{figure}
\centering
\includegraphics[width = 0.95\textwidth,
		page = 1,clip = true, trim = 0in 0in 0in 0.in]{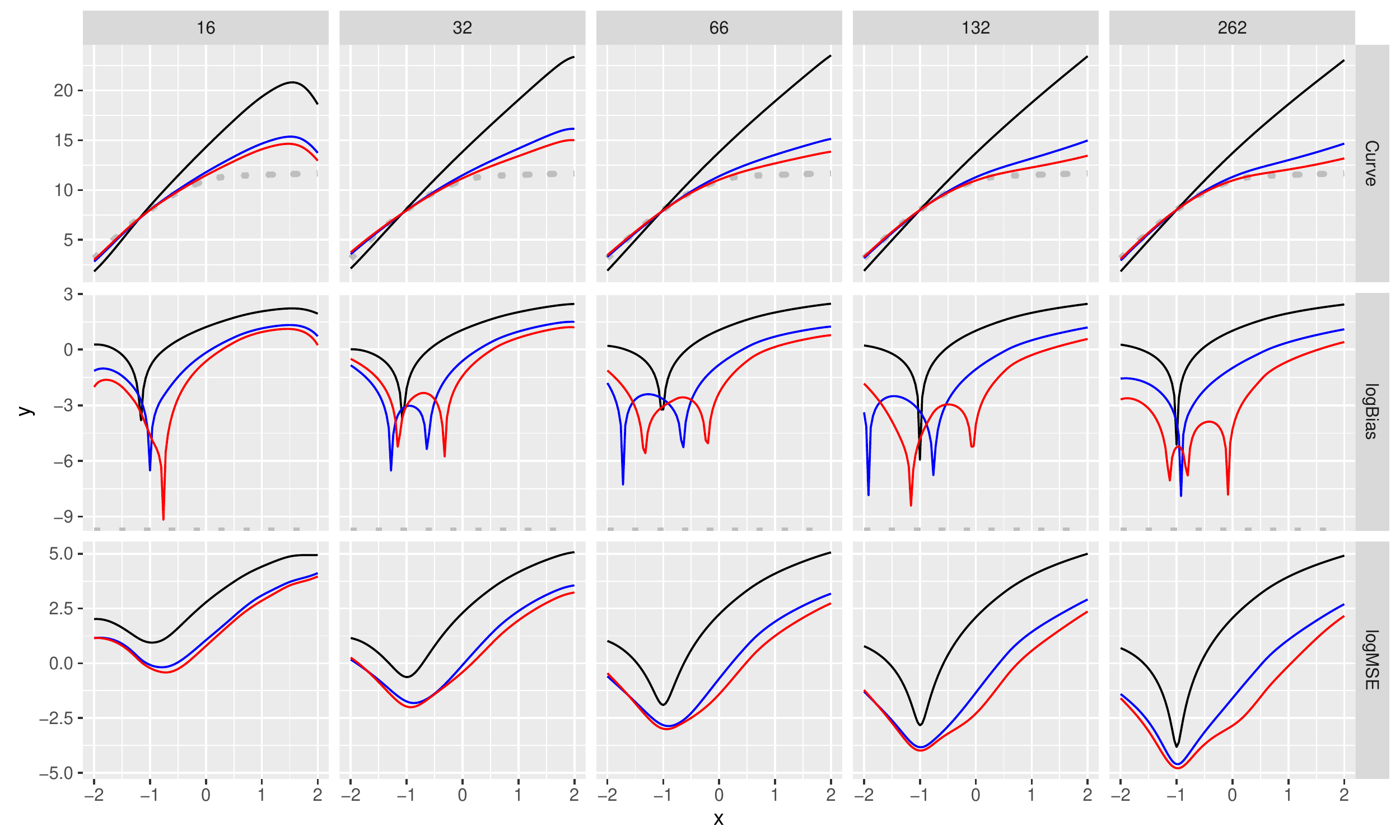}
\caption{The marginal estimate of $\mu = f(x_1)$ for P2 under scenario~\nameref{obssubpop}. Compares the (true) population curve (broken grey) to the P1-P2 pair sample with equal weights (black), final analysis or `marginal' weights (blue), and household pairwise or `second order' weights (red). Top to bottom: estimated curve, log of absolute bias, log of mean square error. Left to right: doubling of sample size for whole sample (100 to 1600). }
\label{fig:P2margP1}
\end{figure}
\FloatBarrier

\section{Application to NSDUH}
\label{sec:NSDUH}
Figure~\ref{fig:allsamp} shows the estimated relationship between the median ($q = 0.5$) number of days of alcohol use in the past month vs. age for scenario~\nameref{subpop}, that targets the sub-population of individuals who live with a spouse (which is known for the household population) and employs the entire sample of individuals ($n \approx$ 21,000) for estimation. The three weight alternatives (equal, marginal, and household pairwise) produce similar curves with overlapping 95\% intervals, though the household pairwise weighting estimates a notably less steep rise in frequency of alcohol use through young adulthood than does marginal weighting. The overlap of the estimates under both weighting schema, on the one hand, with those under equal weighting, on the other hand, may suggest that the sampling of the (unconditioned) spouse-spouse sub-population is only weakly informative if we flexibly account for age in our models (Recall that sample is allocated disproportionately across age groups and states).

Figure~\ref{fig:sppair} shows estimates that include only the sample of pairs in which both spouses were selected and responded ($n \approx$ 7,000) of scenario~\nameref{obssubpop}. Here there is less agreement among the three weighting schema, which we also realized in our simulation study. The general shape estimated under the household pairwise weights further accentuates the less steep rise in median alcohol frequency of use estimated by the pair weights using the full sample (scenario~\ref{subpop}) in Figure~\ref{fig:allsamp}.  Though the \emph{same} spouse-spouse sub-population of scenario~\nameref{subpop} is targeted in this figure as in Figure~\ref{fig:allsamp}, there are notable differences between these curves estimated under household pairwise weighting; for example, the full sample curves have a mode at an older age.

Figure~\ref{fig:spcond} shows the decomposition of the curves from Figure~\ref{fig:sppair} based on the self-reported behavior of the spouse (in their sampled response), which now targets sub-populations from scenario~\nameref{condsubpop} of individuals in spouse-spouse households, conditioned on the alcohol use of their spouses. The top set of curves corresponds to the median estimates for individuals ($n \approx$ 4,000) given that their spouse reported past month alcohol use ($y \ge 1$). The bottom set of curves represent the complement corresponding to median estimates for individuals ($n \approx$ 3,000) whose spouse reported no past month alcohol use ($y = 0$). While the bottom curves show little difference between the weighting methods and little change across age, the top set of curves for the household pairwise weights suggest a different pattern than the marginal and equal weights.  While the latter (marginal and equal weights) show a general lack of change after an initial increase in younger ages, the former (household pairwise weights) suggest a continued increase across age, perhaps with a jump around middle age; that is, alcohol consumption continues to increase with age for individuals who spouses consume alcohol at least once per month (It is important to remember that this study is cross-sectional and differences between ages may be due to cohort effects rather than a progression over time). These results provide some useful insights:
\begin{enumerate}
	\item While Figures~\ref{fig:allsamp} and \ref{fig:sppair} show a clear increase and then decrease with age, Figure \ref{fig:spcond} shows mostly constant or increasing levels among both sub-populations. This is an example of the classic issue of aggregation error (Simpson's Paradox) and can be explained by the increasing proportion of the population falling into the lower curve (spouse did not use alcohol in the past month) with increasing age.
	\item Both Figure~\ref{fig:allsamp} and Figure~\ref{fig:sppair} show different curve shapes \emph{within} each weighting scheme, even though their sub-samples both map to the same sub-population. The simulations in section~\ref{simulation} suggest that these differences in curve shapes when using the different sub-samples are likely due to the added complexity of post-stratification, non-response bias or other sources of error (not related to sampling error) present in the NSDUH sample, rather than true differences (See \citet{MRB:PersonWeight:2014} and \citet{MRB:PairWeight:2014} for more details on weighting adjustments for NSDUH). The simulation results also suggest, however, that the differences in estimated curve shapes for the sub-population \emph{between} weighting methods for both Figures \ref{fig:sppair} and \ref{fig:spcond} would be expected and may be due to the reduced bias of the household pairwise weights over the marginal and equal weights.
\end{enumerate}
\begin{figure}
\centering
\includegraphics[width = 0.95\textwidth,
		page = 1,clip = true, trim = 0in 0in 0in 0in]{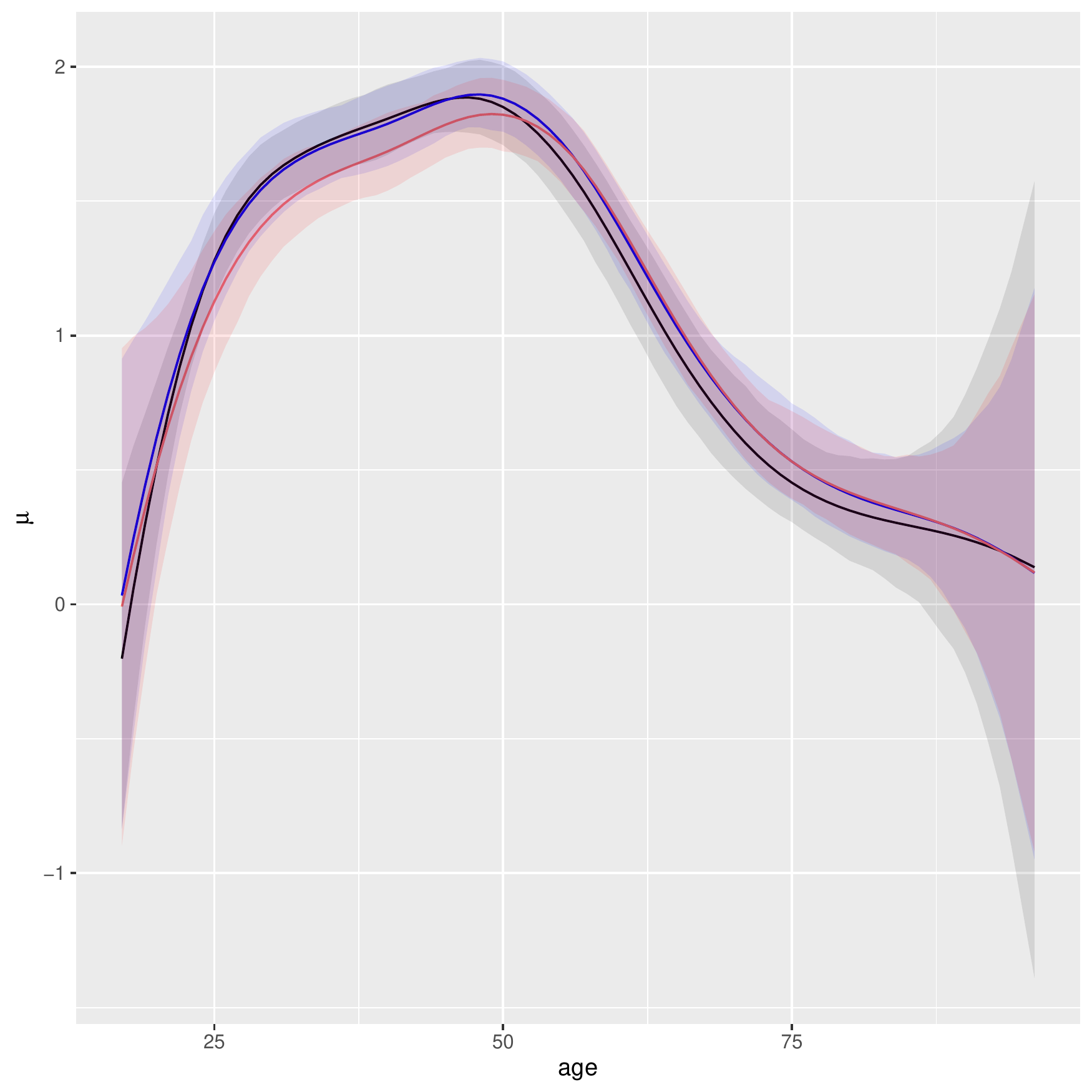}
\caption{Estimated median number of days using alcohol in the past month among individuals residing with a spouse (based on full sample) under scenario~\nameref{subpop} using equal weights (black), final analysis or `marginal' weights (blue), and household pairwise or `second order' weights (red).}
\label{fig:allsamp}
\end{figure}
\FloatBarrier

\begin{figure}
\centering
\includegraphics[width = 0.95\textwidth,
		page = 1,clip = true, trim = 0in 0in 0in 0in]{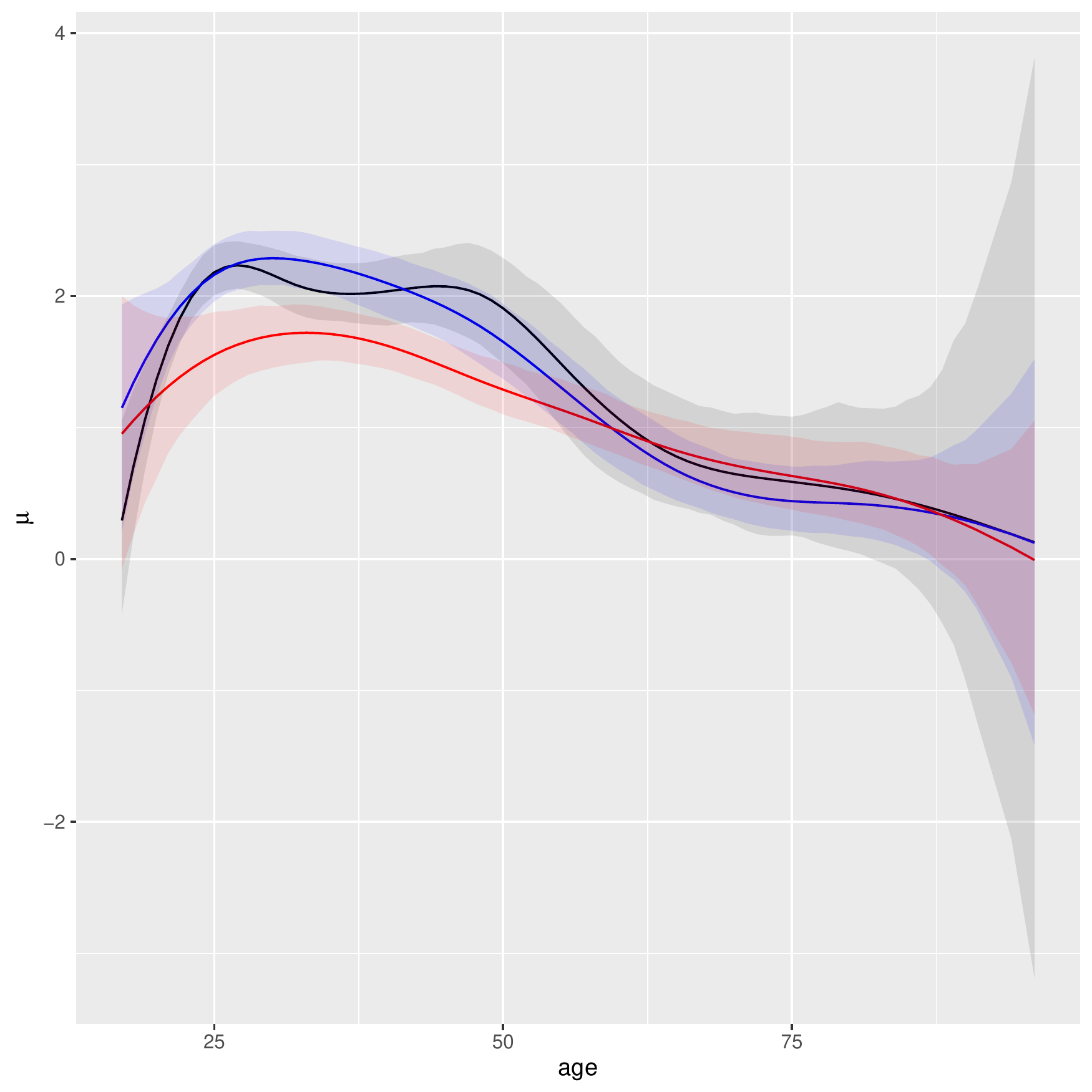}
\caption{Estimated median number of days using alcohol in the past month among individuals residing with a spouse (based on the observed pair sample) under scenario~\nameref{obssubpop} using equal weights (black), final analysis or `marginal' weights (blue), and household pairwise or `second order' weights (red).}
\label{fig:sppair}
\end{figure}
\FloatBarrier

\begin{figure}
\centering
\includegraphics[width = 0.95\textwidth,
		page = 1,clip = true, trim = 0in 0in 0in 0in]{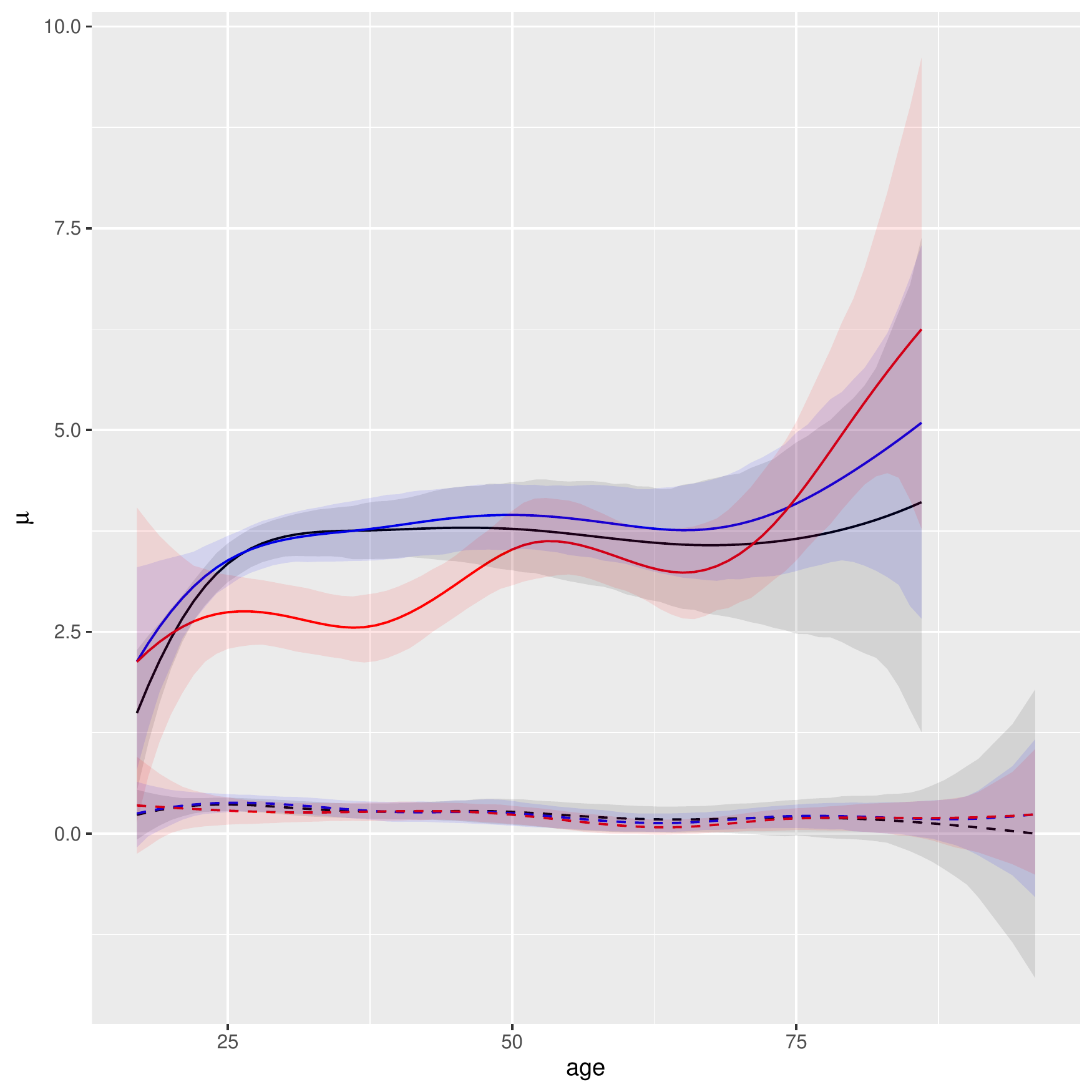}
\caption{Estimated median number of days using alcohol in the past month among individuals conditional on their spouse's past month use of alcohol (based on the observed pair sample) under scenario~\nameref{condsubpop} using equal weights (black), final analysis or `marginal' weights (blue), and household pairwise or `second order' weights (red). Solid lines indicate past month use by spouse; broken lines indicate no past month use by spouse.}
\label{fig:spcond}
\end{figure}
\FloatBarrier

\section{Conclusions}
This paper extends the previous work of \citet{2015arXiv150707050S} to include sampling designs in which the second order dependency between units does not fully attenuate. For some multi-stage surveys, such as the CE and the NSDUH, the dependence structure between most units is dominated by the nearly independent first stages of selection. This leads the `full' second order (pairwise) weights to quickly converge to the first order (marginal) weights, suggesting that this dependency is often negligible and that marginal weights are robust for inference on the general population.

It is when we (i) sub-select within the last stage of selection (e.g., household) to target inference to an associated sub-population, (ii) include only joint or conditional sample responses between members within the last stage cluster for modeling, and (iii) the sub-selection probabilities are informative (related to the outcome (e.g. size or age)) that we achieve a gain from using the second order weights.  Targeting inference to sub-populations of individuals in a household based on the behaviors of other members within the household, for example, may be of inferential interest to policy makers and researchers, but unbiased estimations for such sub-populations are not possible with first order weights.

When the inference targets the general population of individuals, however, while both first and second order weighting methods produce asymptotically unbiased inference about the population from estimation on realized samples, there may be a minor loss of efficiency for second order relative to first order weighting if the variance of the second order weights is larger than that for the first order weights. The more practical concern is the often lack of availability of second order weights.   The burden to compute and store last stage (household) pairwise weights is much reduced because it is confined to the last stage of selection, so that the weight for each individual may be constructed from a single (or, more generally, a few) pairwise term.  To the extent that stakeholders express interest to conduct inference on sub-populations of individuals conditioned on the behavior of other individuals within the last stage, this paper may help encourage statistical agencies to pursue methods to provide access to second order weights while addressing potential concerns of confidentiality.

\citet{}
%\section{Acknowlegdements}
% -------- BIBLIOGRAPHY ----------------------------------------------
\bibliography{refs_sep2017}
\bibliographystyle{agsm}

\appendix

\section{Enabling Lemmas}\label{Appendix}

\begin{lemma}\label{numerator}
Suppose conditions~\nameref{existtests} and ~\nameref{bounded} hold.  Then for every $\xi > \xi_{N_{\nu}}$, a constant, $K>0$, and any constant, $\delta > 0$,
\begin{align}
\mathbb{E}_{P_{0},P_{\nu}}\left[\mathop{\int}_{P\in\mathcal{P}\backslash\mathcal{P}_{N_{\nu}}}\mathop{\prod}_{i=1}^{N_{\nu}}
\frac{p^{\pi}}{p_{0}^{\pi}}\left(\mbf{X}_{i}\delta_{\nu i}\right)d\Pi\left(P\right)\left(1-\phi_{n_{\nu}}\right)\right] \leq \Pi\left(\mathcal{P}\backslash\mathcal{P}_{N_{\nu}}\right)& \label{outside}\\
\mathbb{E}_{P_{0},P_{\nu}}\left[\mathop{\int}_{P\in\mathcal{P}_{N_{\nu}}:d^{\pi}_{N_{\nu}}\left(P,P_{0}\right)> \delta\xi}\mathop{\prod}_{i=1}^{N_{\nu}}
\frac{p^{\pi}}{p_{0}^{\pi}}\left(\mbf{X}_{i}\delta_{\nu i}\right)d\Pi\left(P\right)\left(1-\phi_{n_{\nu}}\right)\right] &\leq \nonumber \\
2\gamma\exp\left(\frac{-K n_{\nu}\delta^{2}\xi^{2}}{\gamma}\right).&\label{inside}
\end{align}
\end{lemma}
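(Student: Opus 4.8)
The plan is to treat both inequalities as the two standard ``numerator'' controls of the posterior-contraction machinery of \citet{Ghosal00convergencerates} and \citet{ghosal2007}, now adapted to the pairwise-weighted pseudo-likelihood and to the dependence that $P_{\nu}$ induces across units. In each case I would first apply Fubini to interchange the joint expectation $\mathbb{E}_{P_{0},P_{\nu}}$ with the prior integral $\int d\Pi(P)$, reducing both claims to pointwise-in-$P$ bounds on the tested, sampling-weighted likelihood ratio $\mathbb{E}_{P_{0},P_{\nu}}\left[\prod_{i}\frac{p^{\pi}}{p_{0}^{\pi}}(\mbf{X}_{i}\delta_{\nu i})(1-\phi_{n_{\nu}})\right]$. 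The organising device is iterated expectation: condition on the generated population $\mbf{X}_{1},\ldots,\mbf{X}_{N_{\nu}}$ and take the design expectation first. The key identity is the design-unbiasedness $\mathbb{E}_{P_{\nu}}\left[\delta_{\nu i}\delta_{\nu k}/\pi_{\nu ik}\mid\mbf{X}\right]=1$, so that the weighted exponent $w^{\ast}_{i}=\frac{1}{N_{\nu}-1}\sum_{k\neq i}\delta_{\nu i}\delta_{\nu k}/\pi_{\nu ik}$ has design mean one; this is the pairwise analogue of the fact that an ordinary likelihood ratio integrates to one under $P_{0}$.

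For \eqref{outside} I would bound $0\le 1-\phi_{n_{\nu}}\le 1$ and then control the expected weighted likelihood ratio on the region outside the sieve. The subtlety, relative to the $iid$ case, is that $\prod_{i}(p/p_{0})^{w^{\ast}_{i}}$ is not a genuine change of measure: a R\'enyi-type inequality shows that its $P_{0}$-expectation can exceed one, so one cannot simply invoke $\mathbb{E}_{P_{0}}[\prod_{i} p/p_{0}]=1$. I would instead combine the change-of-measure step with the uniform weight bound $1/\pi_{\nu ik}\le\gamma$ of condition~\nameref{bounded}, which keeps every exponent finite and lets the tilted alternative be controlled, so that the integral over $\mathcal{P}\backslash\mathcal{P}_{N_{\nu}}$ collapses to the prior mass $\Pi(\mathcal{P}\backslash\mathcal{P}_{N_{\nu}})$; this intermediate bound is subsequently combined with condition~\nameref{sizespace}.

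For \eqref{inside} I would construct Le~Cam--type tests from the local entropy condition~\nameref{existtests}, working throughout in the sampling-weighted Hellinger metric $d^{\pi}_{N_{\nu}}$. The steps are: cover $\{P\in\mathcal{P}_{N_{\nu}}:d^{\pi}_{N_{\nu}}(P,P_{0})>\delta\xi\}$ by Hellinger balls whose count is controlled by \nameref{existtests}; build a single test separating $P_{0}$ from each ball, with type~I and type~II errors of order $\exp(-c\,n_{\nu}\delta^{2}\xi^{2})$; and union these into one test $\phi_{n_{\nu}}$. The bound $1/\pi_{\nu ik}\le\gamma$ enters the first- and second-moment estimates for the weighted test statistics, degrading the exponent by a factor $1/\gamma$ and supplying the prefactor, which together yield the stated $2\gamma\exp(-Kn_{\nu}\delta^{2}\xi^{2}/\gamma)$; the appearance of $n_{\nu}$ rather than $N_{\nu}$ is because the indicators $\delta_{\nu i}$ annihilate all but the sampled terms.

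The hard part will be the second step in both displays: controlling the weighted pseudo-likelihood ratio under the dependence that $P_{\nu}$ induces. Because the pairwise weights couple units through shared inclusion indicators, the product neither factorises nor is a Radon--Nikodym derivative, so the $iid$ arguments of \citet{Ghosal00convergencerates} do not transfer verbatim; the replacements are precisely (i) the design-unbiasedness of the pairwise exponents and (ii) the uniform weight bound of condition~\nameref{bounded}. I note that conditions~\nameref{factorthird} and~\nameref{factorfourth} are \emph{not} invoked here---only \nameref{existtests} and~\nameref{bounded} are needed---since they instead enter the companion lower bound on the denominator through the variance of the weighted empirical process.
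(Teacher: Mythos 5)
Your overall route is the paper's: both displays are the standard Ghosal--Ghosh--van der Vaart numerator controls, carried over from \citet{2015arXiv150707050S} by swapping the marginal weight $\delta_{\nu i}/\pi_{\nu i}$ for the averaged pairwise exponent $\frac{1}{N_{\nu}-1}\sum_{k\neq i}\delta_{\nu i}\delta_{\nu k}/\pi_{\nu ik}$, and you are right that only \nameref{existtests} and \nameref{bounded} are invoked here. However, the two mechanisms you supply are not the ones that close the argument, and one of them would fail. For Equation~\eqref{outside}, design-unbiasedness of the exponent does not help: $x\mapsto x^{w}$ is convex in $w$, so $\mathbb{E}_{P_{\nu}}\left[(p/p_{0})^{w^{\ast}_{i}}\right]\geq (p/p_{0})^{\mathbb{E}_{P_{\nu}}w^{\ast}_{i}}$, i.e.\ Jensen points the wrong way; and the upper bound $1/\pi_{\nu ik}\leq\gamma$ alone cannot force $\mathbb{E}_{P_{0},P_{\nu}}\left[\prod_{i}(p/p_{0})^{w^{\ast}_{i}}\right]\leq 1$ (a single exponent equal to $\gamma>1$ already lets the expectation exceed one whenever $p/p_{0}$ has a large enough $\gamma$-th moment under $P_{0}$), so your tilting step would not collapse the integral to the bare prior mass $\Pi\left(\mathcal{P}\backslash\mathcal{P}_{N_{\nu}}\right)$ with constant $1$. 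What the paper actually uses is the opposite-direction bound: each $1/\pi_{\nu\ell k}\geq 1$, hence the averaged exponent is $\geq 1$ on every sampled unit, and combined with the maximizing-sample device of \citet{2015arXiv150707050S} (on which $p/p_{0}\leq 1$ by construction) this reproduces their intermediate Equation~$32$ verbatim.

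For Equation~\eqref{inside} your test skeleton is right, but you cannot cover $\{P: d^{\pi}_{N_{\nu}}(P,P_{0})>\delta\xi\}$ directly with condition~\nameref{existtests}, which counts balls in the \emph{unweighted} metric $d_{N_{\nu}}$. The missing bridge --- and the actual source of the factor $\gamma$ in both the exponent and the prefactor --- is the metric comparison $d^{\pi,2}_{N_{\nu}}\left(p_{1},p_{2}\right)\leq\gamma\, d^{2}_{N_{\nu}}\left(p_{1},p_{2}\right)$, obtained by bounding $\delta_{\nu i}\delta_{\nu k}\leq 1$ and then $1/\pi_{\nu ik}\leq\gamma$ termwise in the weighted average. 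With that inequality the slice-and-sum argument over dyadic shells $\mathcal{A}^{\pi}_{r}$ in \citet{2015arXiv150707050S} goes through unchanged; without it, the entropy hypothesis does not speak to the weighted metric in which the event is stated, and your ``first- and second-moment estimates'' have no concrete handle on where $\gamma$ enters.
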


The constant multiplier, $\gamma \geq 1$, defined in condition~\nameref{bounded}, restricts the distribution of the sampling design by bounding all marginal inclusion probabilities for population units away from $0$.  As with the main result, the upper bound is increased by $\gamma$.

\begin{proof}\label{AppNumerator}
We begin by achieving the intermediate bound of Equation $32$ in \citet{2015arXiv150707050S}, on any set $B \in \mathcal{P}$, by replacing $1/\pi_{\nu\ell}$ in Equations $29 - 31$ of \citet{2015arXiv150707050S} with $(1/(n_{\nu}-1))\mathop{\sum}_{k\neq\ell\in \bm{\delta}_{\nu}}1/\pi_{\nu \ell k}$, where $\bm{\delta}_{\nu}$ represents a particular sample of units, of fixed size $n_{\nu}$, drawn from the space of samples, $\Delta_{\nu}$.  
The upper bound result in Equation $32$ is achieved because each term, $1/\pi_{\nu \ell k}$, in the sum of $n_{\nu}-1$ terms is greater than or equal to $1$, such that the term, as a whole, is greater than or equal to $1$ for each unit, $\ell \in \bm{\delta}^{\ast}_{\nu}$, where $\bm{\delta}^{\ast}_{\nu}$ is that sample realization that maximizes $p/p_{0}\left(\mbf{X}_{\ell}\right)$ (which is less than or equal to $1$, by construction, however this can also be relaxed \citep{2016arXiv160607488S}).
The bound specified in Equation~\ref{outside} then directly follows from application of the intermediate bound on the set $\mathcal{P}\backslash\mathcal{P}_{N_{\nu}}$.

We next achieve an upper bound result for the expectation in Equation~\ref{inside} stated in Equations $36$ of \citet{2015arXiv150707050S} on models, $P$, in the slice, $\mathcal{A}^{\pi}_{r} =\{P\in \mathcal{P}_{N_{\nu}}: r\epsilon_{N_{\nu}} \leq d_{N_{\nu}}^{\pi}\left(P,P_{0}\right) \leq 2r\epsilon_{N_{\nu}} \}$ for integers, $r$, by again using the intermediate bound of Equation $32$ in \citet{2015arXiv150707050S}.  This result on a slice derives from establishing an upper bound the (pairwise) sampling weighted, pseudo Hellinger distance, as follows:
\begin{align*}
d^{\pi,2}_{N_{\nu}}\left(p_{1},p_{2}\right) &= \frac{1}{N_{\nu}}\mathop{\sum}_{i=1}^{N_{\nu}}\frac{1}{(N_{\nu}-1)}\mathop{\sum}_{k\neq i\in U_{\nu}}\frac{\delta_{\nu i}\delta_{\nu k}}{\pi_{\nu ik}}d^{2}\left(p_{1}(\mbf{X}_{i}),p_{2}(\mbf{X}_{i})\right)\\
&\leq \frac{1}{N_{\nu}}\mathop{\sum}_{i=1}^{N_{\nu}}\left[\frac{1}{(N_{\nu}-1)}\mathop{\sum}_{k\neq i\in U_{\nu}}\frac{1}{\pi_{\nu ik}}\right]d^{2}\left(p_{1}(\mbf{X}_{i}),p_{2}(\mbf{X}_{i})\right)\\
&\leq \frac{1}{N_{\nu}}\mathop{\sum}_{i=1}^{N_{\nu}}\left[\frac{1}{(N_{\nu}-1)}(N_{\nu}-1)\gamma d^{2}\left(p_{1}(\mbf{X}_{i}),p_{2}(\mbf{X}_{i})\right)\right]\\
&\leq\gamma d^{2}_{N_{\nu}}\left(P_{1},P_{2}\right),
\end{align*}
which, in turn, produces the upper bound stated in Equation $35$ of \citet{2015arXiv150707050S} that directly leads to the result in Equation $36$.  Finally, the result of Equation~\ref{inside} is directly achieved by adding up this upper bound on a slice over the countable collection of (dyadic) slices that form the set over which the integral is taken in Equation~\ref{inside}, as is outlined in the remainder of the proof in \citet{2015arXiv150707050S}.  This concludes the proof.
\end{proof}

\begin{lemma}\label{denominator}
For every $\xi > 0$ and measure $\Pi$ on the set,
\begin{equation*}
B = \left\{P:-P_{0}\log\left(\frac{p}{p_{0}}\right) \leq \xi^2, P_{0}\left(\log\frac{p}{p_{0}}\right)^{2} \leq \xi^{2}\right\}
\end{equation*}
under the conditions ~\nameref{sizespace}, ~\nameref{priortruth}, ~\nameref{bounded}, ~\nameref{factorthird}, ~\nameref{factorfourth}, we have for every $C > 0 $, $C_{3} = C_{4} + C_{5} +1$ and $N_{\nu}$ sufficiently large,
\begin{equation}\label{denomresult}
\mbox{Pr}\left\{\mathop{\int}_{P\in\mathcal{P}}\displaystyle\mathop{\prod}_{i=1}^{N_{\nu}}\frac{p^{\pi}}{p_{0}^{\pi}}
\left(\mbf{X}_{i}\delta_{\nu i}\right)d\Pi\left(P\right)\leq \exp\left[-(1+C)N_{\nu}\xi^{2}\right]\right\}
\leq \frac{\gamma+C_{3}}{C^{2} N_{\nu}\xi^{2}},
\end{equation}
where the above probability is taken with the respect to $P_{0}$ and the sampling generating distribution, $P_{\nu}$, jointly.
\end{lemma}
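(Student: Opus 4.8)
The plan is to follow the classical lower-bound argument for the denominator of a posterior (Lemma 8.1 of \citet{Ghosal00convergencerates}), adapted to the pairwise-weighted log-integrand, and to reduce the required concentration to a second-moment (Chebyshev) bound whose variance is controlled by the sampling-design conditions. Write $g_{i} = \log\frac{p}{p_{0}}(\mbf{X}_{i})$, $W_{i} = \frac{1}{N_{\nu}-1}\sum_{k\neq i\in U_{\nu}}\frac{\delta_{\nu i}\delta_{\nu k}}{\pi_{\nu ik}}$, and $\Lambda(P) = \sum_{i=1}^{N_{\nu}}W_{i}g_{i}$, so that $\prod_{i}\frac{p^{\pi}}{p_{0}^{\pi}}(\mbf{X}_{i}\delta_{\nu i}) = \exp(\Lambda(P))$. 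Restricting the integral over $\mathcal{P}$ to the set $B$ and applying Jensen's inequality to the probability measure $d\Pi/\Pi(B)$ gives $\int_{\mathcal{P}}\prod_{i}\frac{p^{\pi}}{p_{0}^{\pi}}d\Pi \geq \Pi(B)\exp(Z)$, where $Z = \frac{1}{\Pi(B)}\int_{B}\Lambda(P)\,d\Pi(P)$ and $\Pi(B)>0$ is guaranteed by condition~\nameref{priortruth}. Since $\Pi(B)\leq 1$ forces $-\log\Pi(B)\geq 0$, the event in \eqref{denomresult} is contained in $\{Z \leq -(1+C)N_{\nu}\xi^{2}\}$, so it suffices to bound the probability of the latter.

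First I would compute the mean of $Z$. Because the design is drawn conditionally on the generated population, $\mathbb{E}_{P_{\nu}\mid\mbf{X}}[\delta_{\nu i}\delta_{\nu k}/\pi_{\nu ik}] = 1$, so $\mathbb{E}_{P_{\nu}\mid\mbf{X}}[\Lambda(P)\mid\mbf{X}] = \sum_{i}g_{i}$ and hence $\mathbb{E}_{P_{0},P_{\nu}}[\Lambda(P)] = \sum_{i}P_{0}\log\frac{p}{p_{0}} \geq -N_{\nu}\xi^{2}$ for every $P\in B$, by the first defining inequality of $B$. Averaging over $B$ gives $\mathbb{E}Z \geq -N_{\nu}\xi^{2}$, so $\{Z\leq -(1+C)N_{\nu}\xi^{2}\}\subseteq\{|Z-\mathbb{E}Z|\geq CN_{\nu}\xi^{2}\}$ and Chebyshev's inequality yields a bound of $\Var(Z)/(C^{2}N_{\nu}^{2}\xi^{4})$. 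Using $\Var(Z)\leq\frac{1}{\Pi(B)}\int_{B}\Var(\Lambda(P))\,d\Pi\leq\sup_{P\in B}\Var(\Lambda(P))$ (via Cauchy--Schwarz on the double integral defining $\Var(Z)$), the whole problem reduces to the uniform variance bound $\sup_{P\in B}\Var_{P_{0},P_{\nu}}(\Lambda(P))\leq(\gamma+C_{3})N_{\nu}\xi^{2}$.

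To bound the variance I would use the law of total variance, $\Var(\Lambda) = \Var_{P_{0}}(\mathbb{E}_{P_{\nu}\mid\mbf{X}}[\Lambda\mid\mbf{X}]) + \mathbb{E}_{P_{0}}[\Var_{P_{\nu}\mid\mbf{X}}(\Lambda\mid\mbf{X})]$. The first term equals $\Var_{P_{0}}(\sum_{i}g_{i}) = \sum_{i}\Var_{P_{0}}(g_{i}) \leq N_{\nu}\xi^{2}$ by independence of the $inid$ population draws together with the second defining inequality $P_{0}(\log\frac{p}{p_{0}})^{2}\leq\xi^{2}$ of $B$; this produces the ``$+1$'' in $C_{3}=C_{4}+C_{5}+1$. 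The second term is the pure sampling variance $\mathbb{E}_{P_{0}}[\sum_{i,j}g_{i}g_{j}\,\Cov_{P_{\nu}\mid\mbf{X}}(W_{i},W_{j})]$, which I would expand as a quadruple sum over $(i,k,j,\ell)$ of $\frac{1}{(N_{\nu}-1)^{2}}\frac{\Cov(\delta_{\nu i}\delta_{\nu k},\delta_{\nu j}\delta_{\nu\ell})}{\pi_{\nu ik}\pi_{\nu j\ell}}$ and partition according to the number of distinct indices. The four-distinct terms carry $\frac{\pi_{\nu ikj\ell}}{\pi_{\nu ik}\pi_{\nu j\ell}}-1$, which condition~\nameref{factorfourth} forces to be $\order{N_{\nu}^{-1}}$ with $N_{\nu}$ times the ratio bounded by $C_{4}$; the three-distinct terms involve the ratio of third-to-second order probabilities bounded by $C_{5}$ through condition~\nameref{factorthird}; and the two-distinct (fully coinciding pair) terms reduce to $\Var(\delta_{\nu i}\delta_{\nu k})/\pi_{\nu ik}^{2}\leq 1/\pi_{\nu ik}\leq\gamma$ via condition~\nameref{bounded}. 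Bounding the accompanying $g$-factors by the first and second moment inequalities of $B$ and counting the summands at each coincidence level assembles the constant $\gamma+C_{4}+C_{5}+1=\gamma+C_{3}$.

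The main obstacle is the variance bookkeeping in the sampling-variance term, and in particular the four-distinct block: the product $g_{i}g_{j}$ summed against the factor $(\mathbb{E}\sum_{i}g_{i})^{2}$ threatens a contribution of order $N_{\nu}^{2}\xi^{4}$, and it is precisely the $\order{N_{\nu}^{-1}}$ factorization in condition~\nameref{factorfourth} that collapses an $\order{N_{\nu}^{4}}$ collection of summands down to order $N_{\nu}\xi^{2}$. The remaining delicate point is tracking which coincidence patterns are compatible with the constraints $k\neq i$ and $\ell\neq j$, and verifying that each third- and fourth-order numerator is matched with the correct second-order normalizing product so that conditions~\nameref{factorthird} and~\nameref{factorfourth} apply exactly as stated. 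Once the uniform variance bound is established, the Chebyshev step and the reduction above close the argument, paralleling \citet{2015arXiv150707050S} but with second-order weights replacing first-order weights and with third- and fourth-order inclusion probabilities replacing their pairwise-factorization requirement.
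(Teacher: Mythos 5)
Your proposal is correct and follows essentially the same route as the paper's proof: a Jensen--Chebyshev reduction to a second-moment bound on the pairwise-weighted log-likelihood sum, a split of that second moment into a population-generation term (bounded by $\xi^{2}$, supplying the ``$+1$'' in $C_{3}$) and a pure sampling-design term, and a quadruple-sum expansion of the latter partitioned by index coincidences so that condition (A6) controls the four-distinct-index block via $C_{4}$, (A5) the three-distinct block via $C_{5}$, and (A4) the coinciding-pair block via $\gamma$, assembling the constant $\gamma+C_{3}$ and hence the stated bound. One small slip: the step inferring that the event is contained in $\left\{Z\le-(1+C)N_{\nu}\xi^{2}\right\}$ ``since $-\log\Pi(B)\ge 0$'' points the inequality the wrong way for a general measure with $\Pi(B)<1$; it is harmless only because, as in Lemma 8.1 of Ghosal et al.\ and in the paper's usage, $\Pi$ is taken to be a probability measure supported on $B$, so $\Pi(B)=1$.
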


The bound of ``$1$" in the numerator of the result for Lemma $8.1$ of \citet{Ghosal00convergencerates}, is replaced with $\gamma + C_{3}$ for our generalization of this result in Equation~\ref{denomresult}.  The sum of positive constants, $\gamma + C_{3}$, is greater than $1$ and will be larger for sampling designs where the pairwise inclusion probabilities, $\{\pi_{\nu ij}\}$, express a relatively larger variation, which will tend to produce samples that are less representative of the underlying population.

\begin{proof}\label{AppDenominator}
The proof exactly follows that of \citet{2015arXiv150707050S} by bounding the probability expression on left-hand size of Equation~\ref{denomresult} with,
\begin{align}
&\mbox{Pr}\left\{\mathbb{G}^{\pi}_{N_{\nu}}\mathop{\int}_{P\in\mathcal{P}}\log\frac{p}{p_{0}}
d\Pi\left(P\right)\leq -\sqrt{N_{\nu}}\xi^{2}C\right\}\nonumber\\
&\leq\frac{\displaystyle\mathop{\int}_{P\in\mathcal{P}}\left[\mathbb{E}_{P_{0},P_{\nu}}
\left(\mathbb{G}^{\pi}_{N_{\nu}}\log\frac{p}{p_{0}}\right)^{2}\right]d\Pi\left(P\right)}
{N_{\nu}\xi^{4}C^{2}}\label{chebyshev:e2},
\end{align}
where we have used Chebyshev to achieve the right-hand bound of Equation~\ref{chebyshev:e2}.  We now proceed to further bound the numerator in the right-hand side of Equation~\ref{chebyshev:e2}, which will result in the expression on the right-hand side of Equation~\ref{denomresult}. The expectation inside the square brackets on the right-hand side of Equation~\ref{chebyshev:e2} is taken with respect to the joint distribution of population generation and the taking of a sample.  In the sequel, define $\mathcal{A}_{\nu} = \sigma\left(\mbf{X}_{1},\ldots,\mbf{X}_{N_{\nu}}\right)$ as the sigma field of information potentially available for the $N_{\nu}$ units in population, $U_{\nu}$.

\begin{subequations}
\begin{align}
&\mathbb{E}_{P_{0},P_{\nu}}\left[\mathbb{G}^{\pi}_{N_{\nu}}\log\frac{p}{p_{0}}\right]^{2}\\
&= \mathbb{E}_{P_{0},P_{\nu}}\left[\sqrt{N_{\nu}}\left(\mathbb{P}^{\pi}_{N_{\nu}} - \mathbb{P}_{N_{\nu}}\right)\log\frac{p}{p_{0}}-\sqrt{N_{\nu}}\left(\mathbb{P}_{0} - \mathbb{P}_{N_{\nu}}\right)\log\frac{p}{p_{0}}\right]^{2}\\
&= \mathbb{E}_{P_{0},P_{\nu}}\left[\sqrt{N_{\nu}}\left(\mathbb{P}^{\pi}_{N_{\nu}} - \mathbb{P}_{N_{\nu}}\right)\log\frac{p}{p_{0}}-\sqrt{N_{\nu}} \mathbb{G}_{N_{\nu}}\log\frac{p}{p_{0}}\right]^{2}\\
&\leq N_{\nu}\mathbb{E}_{P_{0},P_{\nu}}\left[\left(\mathbb{P}^{\pi}_{N_{\nu}} - \mathbb{P}_{N_{\nu}}\right)\log\frac{p}{p_{0}}\right]^{2} + \mathbb{E}_{P_{0}}\left[\mathbb{G}_{N_{\nu}}\log\frac{p}{p_{0}}\right]^{2}\label{gbound}\\
&\leq N_{\nu}\mathbb{E}_{P_{0},P_{\nu}}\left[\left(\mathbb{P}^{\pi}_{N_{\nu}} - \mathbb{P}_{N_{\nu}}\right)\log\frac{p}{p_{0}}\right]^{2} +~ \xi^{2},
\end{align}
\end{subequations}
where the bound of the expectation of the centered empirical process approximation over the units in the population taken with respect to the population generating distribution, included in the second term in Equation~\ref{gbound}, is shown to be bounded from above (for any constant, $C > 0$) under Lemma $B.2$ of \citet{2015arXiv150707050S} by replacing $(\gamma + C_{3})$ with ``$1$" in the bound $\xi^{2}(\gamma + C_{3})$ because we draw a finite population from $P_{0}$ and do not take a further informative sample under $P_{\nu}$.

We now proceed to further simplify the bound in the first term of Equation~\ref{gbound}.

\begin{subequations}
\begin{align}
&N_{\nu}\mathbb{E}_{P_{0},P_{\nu}}\left[\left(\mathbb{P}^{\pi}_{N_{\nu}} - \mathbb{P}_{N_{\nu}}\right)\log\frac{p}{p_{0}}\right]^{2}\\
&=N_{\nu}\mathbb{E}_{P_{0},P_{\nu}}\left[\frac{1}{N_{\nu}}\mathop{\sum}_{i=1}^{N_{\nu}}\left(\{\frac{1}{(N_{\nu}-1)}\mathop{\sum}_{k\neq i \in U_{\nu}}\frac{\delta_{\nu i}\delta_{\nu k}}{\pi_{\nu ik}}\} - 1\right)\log\frac{p}{p_{0}}\left(\mbf{X}_{i}\right)\right]^{2}\\
&=\frac{1}{N_{\nu}}\mathop{\sum}_{i,j\in U_{\nu}}\mathbb{E}_{P_{0},P_{\nu}}\left[\left(\{\frac{1}{(N_{\nu}-1)}\mathop{\sum}_{k\neq i \in U_{\nu}}\frac{\delta_{\nu i}\delta_{\nu k}}{\pi_{\nu ik}}\} - 1\right)\times\right.\nonumber\\
&\left.{}\left(\{\frac{1}{(N_{\nu}-1)}\mathop{\sum}_{\ell\neq j \in U_{\nu}}\frac{\delta_{\nu j}\delta_{\nu \ell}}{\pi_{\nu j\ell}}\} - 1\right)\log\frac{p}{p_{0}}\left(\mbf{X}_{i}\right)\frac{p}{p_{0}}\left(\mbf{X}_{j}\right)\vphantom{\frac{1}{(N_{\nu}-1)}}\right]^{2}\\
&=\frac{1}{N_{\nu}}\mathop{\sum}_{i \neq j\in U_{\nu}}\mathbb{E}_{P_{0}}\left[\mathbb{E}_{P_{\nu}}\left\{\frac{1}{(N_{\nu}-1)^{2}}\mathop{\sum}_{k\neq i,\ell\neq j \in U_{\nu}}\frac{\delta_{\nu i}\delta_{\nu k}\delta_{\nu j}\delta_{\nu \ell}}{\pi_{\nu ik}\pi_{\nu j\ell}}\right.\right.\nonumber\\
&\left.{}+\frac{1}{(N_{\nu}-1)^{2}}\mathop{\sum}_{k\neq i,k\neq j \in U_{\nu}}\frac{\delta_{\nu i}\delta_{\nu j}\delta_{\nu k}}{\pi_{\nu ik}\pi_{\nu jk}}\right.\nonumber\\
&\left.{}-\frac{1}{(N_{\nu}-1)}\mathop{\sum}_{k\neq i \in U_{\nu}}\frac{\delta_{\nu i}\delta_{\nu k}}{\pi_{\nu ik}}\right.\nonumber\\
&\left.\left. -\frac{1}{(N_{\nu}-1)}\mathop{\sum}_{\ell\neq j \in U_{\nu}}\frac{\delta_{\nu j}\delta_{\nu \ell}}{\pi_{\nu j\ell}} + 1 \middle\vert \mathcal{A}_{\nu}\vphantom{\frac{1}{(N_{\nu}-1)^{2}}}\right\}\left(\log\frac{p}{p_{0}}\left(\mbf{X}_{i}\right)\frac{p}{p_{0}}\left(\mbf{X}_{j}\right)\right)
\vphantom{\mathbb{E}_{P_{\nu}}}\right]\nonumber\\
&+ \frac{1}{N_{\nu}}\mathop{\sum}_{i=j\in U_{\nu}}\mathbb{E}_{P_{0}}\left[\mathbb{E}_{P_{\nu}}\left\{\frac{1}{(N_{\nu}-1)^{2}}\mathop{\sum}_{k\neq\ell\neq i \in U_{\nu}}\frac{\delta_{\nu i}\delta_{\nu k}\delta_{\nu \ell}}{\pi_{\nu ik}\pi_{\nu i\ell}} \right.\right.\nonumber\\
&\left.\left.{}-\frac{2}{(N_{\nu}-1)}\mathop{\sum}_{k\neq i \in U_{\nu}}\frac{\delta_{\nu i}\delta_{\nu k}}{\pi_{\nu ik}} + 1\middle\vert\mathcal{A}_{\nu}\right\}\left(\log\frac{p}{p_{0}}\left(\mbf{X}_{i}\right)\right)^{2}\vphantom{\frac{1}{(N_{\nu}-1)^{2}}}\right]\nonumber\\
&+ \frac{1}{N_{\nu}}\mathop{\sum}_{i=j\in U_{\nu}}\mathbb{E}_{P_{0}}\left[\mathbb{E}_{P_{\nu}}\left\{\frac{1}{(N_{\nu}-1)^{2}}\mathop{\sum}_{k\neq i \in U_{\nu}}\frac{\delta_{\nu i}\delta_{\nu k}}{\pi_{\nu ik}^{2}} \middle\vert\mathcal{A}_{\nu}\right\}\left(\log\frac{p}{p_{0}}\left(\mbf{X}_{i}\right)\right)^{2}\vphantom{\frac{1}{(N_{\nu}-1)^{2}}}\right]\\
%changes below%
&= \frac{1}{N_{\nu}}\mathop{\sum}_{i \neq j\in U_{\nu}} \mathbb{E}_{P_{0}}\left[
	\left\{\frac{1}{(N_{\nu}-1)^{2}}\mathop{\sum}_{k\neq i, \ell\neq j \in U_{\nu}}\frac{\pi_{\nu ikj\ell}}{\pi_{\nu ik}\pi_{\nu j\ell}} - 1  + \frac{1}{(N_{\nu}-1)^{2}}\mathop{\sum}_{k\neq i, k\neq j \in U_{\nu}}\frac{\pi_{\nu ijk}}{\pi_{\nu ik}\pi_{\nu jk}}\right\}  \right.\nonumber\\
&\left. \left( \log\frac{p}{p_{0}}\left(\mbf{X}_{i}\right)\frac{p}{p_{0}}\left(\mbf{X}_{j}\right)\right) \right]\nonumber\\
&+ \frac{1}{N_{\nu}}\mathop{\sum}_{i = j\in U_{\nu}} \mathbb{E}_{P_{0}}\left[
\left\{\frac{1}{(N_{\nu}-1)^{2}}\mathop{\sum}_{k\neq \ell \neq i \in U_{\nu}}\frac{\pi_{\nu ik\ell}}{\pi_{\nu ik}\pi_{\nu i\ell}} - 1\right\}\left(\log\frac{p}{p_{0}}\left(\mbf{X}_{i}\right)\right)^{2} \right]\nonumber\\
&+ \frac{1}{N_{\nu}}\mathop{\sum}_{i = j\in U_{\nu}} \mathbb{E}_{P_{0}}\left[
\left\{\frac{1}{(N_{\nu}-1)^{2}}\mathop{\sum}_{k\neq i \in U_{\nu}}\frac{1}{\pi_{\nu ik}} \right\}\left(\log\frac{p}{p_{0}}\left(\mbf{X}_{i}\right)\right)^{2} \right]\\
&\leq (N_{\nu}-1)\mathop{\sup}_{\nu}\mathop{\max}_{i,j,k,\ell: i \neq j, k\neq i, \ell\neq j\in U_{\nu}}\left\vert\frac{\pi_{\nu ikj\ell}}{\pi_{\nu ik}\pi_{\nu j\ell}} - 1\right\vert \left\{\mathbb{E}_{P_{0}}\log\frac{p}{p_{0}}\left(\mbf{X}_{i}\right)\frac{p}{p_{0}}\left(\mbf{X}_{j}\right)\right\} \nonumber\\
&+ \mathop{\sup}_{\nu}\mathop{\max}_{i,k,\ell: k\neq\ell\neq i\in U_{\nu}}\left\vert\frac{\pi_{\nu k\ell i}}{\pi_{\nu k i}\pi_{\nu \ell i}}\right\vert \left\{\mathbb{E}_{P_{0}}\log\frac{p}{p_{0}}\left(\mbf{X}_{i}\right)^{2}\right\} \nonumber\\
&+ \mathop{\sup}_{\nu}\left[\frac{1}{(N_{\nu}-1)}\frac{1}{\displaystyle\mathop{\min}_{i,k:k\neq i\in U_{\nu}}\vert\pi_{\nu ik}\vert}\right] \left\{\mathbb{E}_{P_{0}}\log\frac{p}{p_{0}}\left(\mbf{X}_{i}\right)^{2}\right\}
\leq (C_{4}+C_{5}+\gamma)\xi^{2},
\end{align}
\end{subequations}
for sufficiently large $N_{\nu}$, where we have applied the condition for $P\in B$ in each of the three terms in the last inequality and conditions ~\nameref{factorfourth}, ~\nameref{factorthird} and  ~\nameref{bounded} for each term in the last inequality, from left-to-right.  We additionally note that $\pi_{\nu ik\ell} = \pi_{\nu ik}$ when $\ell = k,~\ell,k \in U_{\nu}$ and denote $\pi_{\nu k\vert i} := \mbox{Pr}\left(\delta_{\nu k} = 1\vert \delta_{\nu i} = 1\right)$.

We may finally bound the expectation on the right-hand size of Equation~\ref{chebyshev:e2},
\begin{multline}
\mathbb{E}_{P_{0},P_{\nu}}\left[\mathbb{G}^{\pi}_{N_{\nu}}\log\frac{p}{p_{0}}\right]^{2} \leq (C_{4}+C_{5}+\gamma)\xi^{2} \\+ \xi^2 \leq (C_{4}+C_{5}+\gamma+ 1)\xi^{2} \leq (C_{3} +\gamma)\xi^2,
\end{multline}
for $N_{\nu}$ sufficiently large, where we set $C_{3} := C_{4}+C_{5}+1$.  This concludes the proof.

\end{proof}

\section{Calculation of Second Order Weights}\label{sec:weightcalc}
The calculation of second order weights can be motivated in different ways:
\begin{enumerate}
	\item ``Full'' second order weights can be constructed by populating the full $n \choose 2$ matrix of second order inclusion probabilities across the entire sample $\pi_{i,i'}$, taking their inverse $w^{(2)}_{i,i'} = 1/\pi_{i,i'}$, and then summing for each individual record and normalizing by the number of pairs in the target population $w^{(2f)}_{i} = \sum_{i'} w^{(2)}_{i,i'}/(N -1)$ where $N$ is the population size. Because there are three stages, the joint weight component $w^{(2)}_{i,i'}$ is calculated differently depending on whether individuals $i$ and $i'$ are in the same HH, different HHs but same PSU, or different PSUs:
	\begin{itemize}
	\item Case 1: $i$ and $i'$ are in different PSUs. Since we treat selection of PSUs as independent and we select HHs and persons separately across PSUs, the second order weight component is simply the product of the first order weights: $w^{(2)}_{i,i'} = w_{i}^{(1)} w_{i'}^{(1)}$.
	\item Case 2: $i$ and $i'$ are in the same PSU but different HHs: Since we treat HHs as being conditionally independent given the selection of the PSUs, the second order weight term has a common PSU term and distinct HH and individual terms: $w^{(2)}_{i,i'} = w_1^{k} w_2^{j|k}w_3^{i|jk} w_2^{j'|k}w_3^{i'|j'k}$
	\item Case 3: $i$ and $i'$ are in the same HH and thus selected as a pair: The second order weight term has a common PSU term and common HH  term and common person-pair term: $w^{(2)}_{i,i'} = w_1^{k} w_2^{j|k}w_3^{i,i'|jk}$
	\end{itemize}
	\item `Last stage' or `pair' weights can be constructed by assuming the HH is the unit of analysis. This means that first order weight components provide a consistent estimate at the HH level and any sampling dependence across households is negligible. Because responses are not available for all persons within a household, we then incorporate the joint sampling dependence of each pair selected. Instead of summing over all $i,i'$ pairs in the sample, we only sum over the $i,i'$ pairs in each HH, which in this example is a single pair. We use the household size $(N_{p_j})$ to normalize by the number of pairs within each household because each roster of the HH is treated as a population:
	$w^{(2p)}_{i} = w^{(2)}_{i,i'}/(N_{p_j} -1) = w_1^{k} w_2^{j|k}w_3^{i,i'|jk} /(N_{p_j} -1)$. This set of weights is appealing for several reasons. It is most likely that dependence in the response $y$ is strongest within a HH rather than between HHs and PSUs. It is also possible to measure each $N_p$ during data collection. It usually needs to be known before sampling can occur. Whereas the general $N$ for a particular domain may not be available and must be estimated by summing the first order weights.
	\item `Stagewise' second order weights can be constructed by assuming that each of the three stages is a conditionally independent sampling with conditional population frames of PSUs, HHs, and individuals. So any second order dependence would only need to be captured within each stage. Then the first order weights for each stage would be replaced by the scaled sum of the second order weights for that stage. For example the stage 1 weights for PSUs would be $w_{1^{'}}^{k} = \sum_{k^{'}} w_1^{k,k'}/(N_S -1)$ with $ w_1^{k,k'} = 1/\pi_{k,k^{'}}$ and $N_S$ is the number of PSUs in the population. Similarly $w_{2^{'}}^{j|k} = \sum_{j^{'}} w_2^{j,j^{'}|k}/(N_{h_k} -1)$ with $w_2^{j,j^{'}|k} = 1/\pi_{j,j'|k}$ and $N_{h_k}$ is the number of HHs in PSU $k$. The within household stage weights would be $w_{3^{'}}^{i|jk} =w_3^{i,i^{'}|jk} /(N_{p_j} -1)$. Then the final stagewise second order weight would be
\begin{multline*}
w^{(2s)}_{i} = w_{1^{'}}^{k} w_{2^{'}}^{j|k} w_{3^{'}}^{i|jk} = \\
	\left(\sum_{k^{'}} w_1^{k,k^{'}}/(N_S -1)\right) \times
	\left( \sum_{j^{'}} w_2^{j,j^{'}|k}/(N_{h_k} -1) \right) \times
	 \left( w_3^{i,i^{'}|jk} /(N_{p_j} -1)\right)
\end{multline*}
\end{enumerate}

For moderate to large samples, particularly for multi-stage designs with low dependence at the first levels of sampling, the four weights presented here ($w_{i}^{(1)}$, $w^{(2f)}_{i}$, $w^{(2p)}_{i}$, $w^{(2s)}_{i}$) effectively become only two distinct sets of weights ($w_{i}^{(1)}$ and $w^{(2p)}_{i}$). For full second order weights $w^{(2f)}_{i}$, the sum is dominated by Case 1. Then the sum  $w^{(2f)}_{i} = \sum_{i'} w^{(2)}_{i,i'}/(N -1) \rightarrow w_{i}^{(1)} (\sum_{i^{'}} w_{i'}^{(1)})/(N-1)$. The sum of the first order weights $\left( \sum_{i} w_i = \sum_{i'} w_{i'} + w_i \right)$ converges to $N$ so   $w^{(2f)}_{i} \approx \left( \frac{N-w_{i}^{(1)}}{N-1} \right) w_{i}^{(1)} \approx w_{i}^{(1)}$. For stagewise weights $w^{(2s)}_{i}$, the first and second stage components from the PPS design are typically assumed to be the products of independent samples. By a similar argument, for moderate sample sizes  $w_{1'}^{k}$ and $w_{2'}^{j|k}$ is effectively the same as $w_1^{k}$ and $w_2^{j|k}$. The third or last stage weights are still distinct, so the stagewise weights $w^{(2s)}_{i}$ are very similar to the last stage pair weights $w^{(2p)}_{i}$.

\end{document}